\pdfoutput=1
\documentclass{article}
\usepackage{arxiv}
\usepackage[utf8]{inputenc}
\usepackage[english]{babel}
\usepackage[T1]{fontenc}  
\usepackage{extarrows}
\usepackage{blindtext}
\usepackage{hyperref}       
\usepackage{url}           
\usepackage{booktabs}     
\usepackage{amsfonts} 
\usepackage{mathrsfs}
\usepackage{amssymb}
\usepackage{amsthm}
\usepackage{amsmath}
\usepackage{amsfonts} 
\usepackage{nicefrac}       
\usepackage{microtype}      
\usepackage{graphicx}

\title{Attractor Flow Versus Hesse Flow in Wall-Crossing Structures}

\author{Qiang Wang \thanks{wangqiang@shanghaitech.edu.cn}\,\,\,\thanks{Institute of Mathematical Sciences,\, ShanghaiTech University,\, Shanghai, China}}

\begin{document}

\maketitle

\newtheorem{theorem}{Theorem}[section]
\newtheorem{corollary}{Corollary}[theorem]
\newtheorem{lemma}[theorem]{Lemma}
\newtheorem{remark}{Remark}[section]
\newtheorem{proposition}[theorem]{Proposition}
\newtheorem{definition}{Definition}[section]

\begin{abstract}
We recast the physics discussions in the paper of Dieter Van den Bleeken \cite{svan2012bps} within the context of wall-crossing structure à la Kontsevich and Soibelman \cite{kontsevich2014wall}. In particular, we compare the Hesse flow given in \cite{svan2012bps} and the attractor flow on the base of the complex integrable system, and show that both can be used in the formalism of wall-crossing structure. We also propose the notions of dual Hesse flow and dual attractor flow, and show that under the rotation of the $\mathbb{Z}$-affine structure, the Hesse flow can be transformed into the dual attractor flow, while the attractor flow into the dual Hesse flow. This suggests its possible use in Mirror Symmetry. 
\end{abstract}

\keywords{Attractor flow \and Hesse flow \and $\mathbb{Z}$-affine structure \and Wall-crossing structure\and Legendre transform}

\section{Introduction}
\textit{Wall-crossing structure} ("WCS" for short) is a formalism proposed by Kontsevich and Soibelman in \cite{kontsevich2014wall} to encode the Donaldson-Thomas invariants ("DT invariants" for short, or "BPS invariants" in physics) and study their jumps when certain walls are crossed. In this formalism, the invariants could in principle be computed inductively by considering certain flow lines on the base $\mathcal{B}$ of a complex integrable system. Moreover, the base is endowed with $\mathbb{Z}$-affine structures, and the flow lines correspond to the straight lines in the $\mathbb{Z}$-affine structures. These flows are called the \textit{(split) attractor flows}, which derived its name from the "attractor mechanism" in the study of super-symmetric multi-centered black holes in $N=2, d=4$ super-gravity (see for example \cite{denef2011split}\cite{denef2000supergravity}\cite{moore1998arithmetic} for more about this story). 

In \cite{svan2012bps}, the pertinence of the Hesse potential (used in writing the K\"ahler metric on $\mathcal{B}$ as the Hessian of the Hesse potential) in generalizing the attractor flow is discussed, and the notion of \textit{Hesse flow} is introduced. In this paper, we recast the discussion there in the framework of WCS, and perform explicit computation in the $\mathbb{Z}$-affine coordinates adapted to special coordinates. This enables us to see more clearly the relationship between Hesse flow and the attractor flow. We will show that they are two manifestations of the same object in different $\mathbb{Z}$-affine structures (dual to each other). 

In section \ref{sec:2}, we review the notions of $\mathbb{Z}$-affine structure and special geometry to settle up the framework for later discussion. In section \ref{sec:3}, the notion of the complex integrable system will be reviewed. We also review briefly the formalism of wall-crossing structure established in \cite{kontsevich2014wall}, and indicate how it can be constructed from the data coming from complex integrable system. Along the way, the concept of the (split) attractor flow will be discussed with emphasis on its role in producing an algorithm for computing DT-invariants within the WCS formalism. 

Finally, in section \ref{sec:4}, we will use the special coordinates and the their adapted $\mathbb{Z}$-affine coordinates to perform explicit computations that enable us to write the attractor flow as Hesse flow and vise-verse (section \ref{sec:4.2}). The notions of dual attractor flow and dual Hesse flow will emerge naturally (section \ref{sec:4.3}). Besides, we will see how the computations can be interpreted as a duality between Monge-Amp$\grave{e}$re structures on the base manifold (proposition \ref{pro:4.4}).  

\section{Special geometry and \texorpdfstring{$\mathbb{Z}$}--affine structure}\label{sec:2}
In this section, we collect some definitions and terminologies for later use. The main references are \cite{freed1999special}\cite{alekseevsky2002special}\cite{gross2003affine}.
\subsection{\texorpdfstring{$\mathbb{Z}$}--affine manifold and SYZ mirror symmetry}

\begin{definition}
 An n-dimensional topological manifold $\mathcal{B}$ is called an \textit{integral affine ($\mathbb{Z}$-affine) manifold} if it is endowed with an (integral) affine structure  that is given by a maximal atlas $\mathcal{A}=\{U_{i},\Phi_{i}\}$ of coordinate charts such that the transition maps $\Phi_{j}\circ\Phi_{i}^{-1}$ belong to the group of integral affine transformations of $\mathbb{R}^{n}$,i.e.,$Aff_{\mathbb{Z}}(\mathbb{R}^{n}):=GL(n;\mathbb{Z})\ltimes\mathbb{R}^{n}$.
 \end{definition}

$\mathbb{Z}$-affine structure arises (\cite{gross2011real}\cite{gross2003affine}\cite{leung2005mirror}) when investigating mirror symmetry through the Stominger-Yau-Zaslow (SYZ) approach (\cite{strominger1996mirror}). In this approach, the mirror $X^{\vee}$ of a Calabi -Yau manifold $X$ is constructed, roughly speaking (modulo details coming from "instanton"-corrections), as the dual special Lagrangian fibration over the $\mathbb{Z}$-affine base $\mathcal{B}$.

Here $\mathcal{B}$ can be endowed with two $\mathbb{Z}$-affine structures induced from the complex and symplectic structures on $X$ and $X^{\vee}$ respectively. The model is obtained by considering the tangent bundle $\mathcal{T}_{\mathcal{B}}$ and the cotangent bundle $\mathcal{T}^{*}_{\mathcal{B}}$ away from singular locus $\Delta$ (where the fibrations degenerate). Suppose that $\mathcal{B}$ is endowed with $\mathbb{Z}$-affine coordinates $\left\{y_{1},\cdots,y_{n}\right\}$, then $X$ and $X^{\vee}$, as Lagrangian torus fibrations over the same base $\mathcal{B}$, can be described respectively as
 \[f:X=\mathcal{T}_{\mathcal{B}}/\Lambda\longrightarrow \mathcal{B}\qquad f^{\vee}:X^{\vee}=\mathcal{T}^{*}_{\mathcal{B}}/\Lambda^{\vee}\longrightarrow \mathcal{B}\]
where $\Lambda$ is the lattice spanned by $\left\{\frac{\partial}{\partial y_{1}},\cdots,\frac{\partial}{\partial y_{n}}\right\}$, and $\Lambda^{\vee}$ is the one spanned by $\left\{dy_{1},\cdots,dy_{n}\right\}$.

A \textit{Hessian metric} on $\mathcal{B}^{0}:=\mathcal{B}\backslash\Delta$ is given by the Hessian of a multi-valued convex function $K$, which consists of a set of convex functions $K_{i}:U_{i}\rightarrow\mathbb{R}$ on open convex cover $\{U_{i}\}$ of $\mathcal{B}^{0}$ such that the difference $K_{i}-K_{j}$ are affine linear on $U_{i}\cap U_{j}$. The Hessian metric lifts to a K\"ahler metric $g$ on $X$ with the associated K\"ahler potential $\omega=2i\partial\bar{\partial}K$. 
Hitchin (\cite{hitchin1997moduli}) observed that in order for $g$ to be Ricci-flatness, $K$ must satisfies the \textit{real Monge-Amp$\grave{e}$re equation}:
\[det\left(\frac{\partial^{2}K}{\partial y_{i}\partial y_{j}}\right)=constant.\]

\begin{definition}\label{def:2.2}
A integral affine manifold $\mathcal{B}$ endowed with a Hesse potential $K$ that satisfies the above Monge-Amp$\grave{e}$re manifold is called the \textit{affine Monge-Amp$\grave{e}$re manifold}.
\end{definition}

\subsection{Special geometry on \texorpdfstring{$\mathcal{B}$}{lg}}

The following well-known proposition gives an equivalent characterization for $\mathbb{Z}$-affine manifold.

\begin{proposition}\label{pro:2.1}
 Given an $n$-dimensional $\mathbb{Z}$-affine manifold $\mathcal{B}$, there is an torsion-free flat connection $\nabla$ on the (real) tangent bundle $T_{\mathbb{R}}\mathcal{B}^{}$, together with a maximal rank $\nabla$-covariant lattice $T^{\mathbb{Z}}_{\mathbb{R}}\mathcal{B}\subset T_{\mathbb{R}}\mathcal{B}$. And conversely, given such a connection and the lattice, we can induce a $\mathbb{Z}$-affine structure on $\mathcal{B}$.
\end{proposition}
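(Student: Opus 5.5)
The plan is to prove both directions by working chart by chart and using the fact that affine maps have constant Jacobians. The forward direction builds $\nabla$ and the lattice from the $\mathbb{Z}$-affine atlas. The converse builds the atlas from $\nabla$ and the lattice by integrating flat coframes.

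\emph{From the atlas to $(\nabla,T^{\mathbb{Z}}_{\mathbb{R}}\mathcal{B})$.} Fix a chart $(U_i,\Phi_i)$ with coordinates $y^1,\dots,y^n$. On $U_i$ I declare the coordinate fields $\partial/\partial y^a$ to be parallel, so $\nabla_i$ is the connection with vanishing Christoffel symbols. On an overlap $U_i\cap U_j$ the transition map $\Phi_j\circ\Phi_i^{-1}(y)=Ay+b$ has $A\in GL(n;\mathbb{Z})$, so its Jacobian is the constant matrix $A$. Consequently $\partial/\partial y'^a$ is a constant linear combination of the $\partial/\partial y^b$. The second-derivative term in the transformation law of Christoffel symbols vanishes, so $\nabla_i=\nabla_j$ and these local connections glue to a global $\nabla$. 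This $\nabla$ is torsion-free because $\Gamma^a_{bc}=0$ is symmetric. It is flat because its curvature vanishes in every chart. I then define the lattice fibrewise by $T^{\mathbb{Z}}_{\mathbb{R}}\mathcal{B}|_p=\mathrm{span}_{\mathbb{Z}}\{\partial/\partial y^a|_p\}$. This is independent of the chart because $A$ and $A^{-1}$ are both integral. It has maximal rank, and it is $\nabla$-covariant because its generators are parallel.

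\emph{From $(\nabla,T^{\mathbb{Z}}_{\mathbb{R}}\mathcal{B})$ to the atlas.} Since $\nabla$ is flat, every point has a simply connected neighbourhood $U$ on which parallel transport trivializes $T_{\mathbb{R}}\mathcal{B}$. Because the lattice is $\nabla$-covariant, I can choose a parallel frame $e_1,\dots,e_n$ on $U$ that is a $\mathbb{Z}$-basis of the lattice at every point. Let $\theta^1,\dots,\theta^n$ be the dual coframe, which is also parallel.

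The key step is to show that each $\theta^a$ is closed. For a parallel $1$-form and a torsion-free connection,
\[
d\theta^a(X,Y)=(\nabla_X\theta^a)(Y)-(\nabla_Y\theta^a)(X)+\theta^a(T(X,Y))=0 .
\]
By the Poincaré lemma, after shrinking $U$ to be contractible, we have $\theta^a=dy^a$ for functions $y^a$. These functions give a local diffeomorphism $\Phi=(y^1,\dots,y^n)$ by the inverse function theorem, since the $dy^a$ form a coframe.

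It remains to check the transition maps. For two such charts, the parallel lattice bases are related by a matrix in $GL(n;\mathbb{Z})$. This matrix is locally constant because both frames are parallel. Hence $dy'=A\,dy$ on each connected component of the overlap, so $y'=Ay+b$ there. This is exactly the $Aff_{\mathbb{Z}}(\mathbb{R}^{n})$ condition, and passing to the maximal atlas containing these charts finishes the converse. The two constructions are inverse to each other, since the coordinate fields of the constructed charts are the chosen parallel frame.

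The main obstacle is the passage from parallel coframes to coordinates. This is the only place where torsion-freeness is used, and one has to be careful that overlaps may be disconnected. I would handle that by working componentwise, since Definition 1 only requires the transition maps to be locally integral affine.
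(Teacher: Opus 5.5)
Your proof is correct and follows the route the paper indicates in the paragraph after Lemma \ref{lem:2.2}: torsion-freeness (which the paper phrases as $d_{\nabla}(id)=0$) makes the dual coframe of a flat frame closed, and the Poincar\'e lemma then produces flat coordinates. The paper only calls the proposition well-known and gives that sketch of the converse, so your forward direction and your use of lattice-adapted parallel frames to get $GL(n;\mathbb{Z})$ linear parts in the transition maps supply steps the paper leaves implicit.
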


\begin{definition}
(c.f., \cite{freed1999special}\cite{alekseevsky2002special}) A complex manifold $(\mathcal{B},J)$ is called a \textit{special complex manifold} if $\mathcal{B}$ is endowed with a flat torsionfree connection $\nabla$ on its real tangent bundle $T_{\mathbb{R}}\mathcal{B}$ such that $d_{\nabla}J=0$, where $J$ is the complex structure, and $d_{\nabla}$ the covariant derivative associated to $\nabla$. By a \textit{special symplectic manifold}, we mean a special complex manifold $(\mathcal{B},J,\nabla)$ together with $\nabla$-parallel symplectic form $\omega$. Finally, a \textit{special K\"ahler manifold} is a special symplectic manifold $(\mathcal{B},J,\nabla,\omega)$ for which $\omega$ is $J$-invariant, i.e., of type $(1,1)$. Then the induced metric $g(\cdot,\cdot):=\omega(J\cdot,\cdot)$ is called the \textit{special K\"ahler metric} of the special K\"ahler manifold  $(\mathcal{B},J,\nabla,\omega)$.
 \end{definition}

 \begin{lemma}\label{lem:2.2}
 (\cite{freed1999special}) The torsionfree condition of $\nabla$ is equivalent to the following condition\[d_{\nabla}(id)=0\]
 where $id\in\Gamma(T^{*}\mathcal{B}\otimes T\mathcal{B})$ is the $T\mathcal{B}$-valued one form that takes any tangent vector $X$ to itself.
  \end{lemma}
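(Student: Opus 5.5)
The plan is to compute $d_{\nabla}(id)$ explicitly and compare it with the torsion tensor. View $id$ as an element of $\Omega^{1}(\mathcal{B};T\mathcal{B})$. The exterior covariant derivative of a $T\mathcal{B}$-valued one-form $\alpha$ is given by the standard formula
\[
(d_{\nabla}\alpha)(X,Y)=\nabla_{X}\big(\alpha(Y)\big)-\nabla_{Y}\big(\alpha(X)\big)-\alpha([X,Y]).
\]
I would first fix this definition, or derive it from $d_{\nabla}(\beta\otimes s)=d\beta\otimes s-\beta\wedge\nabla s$ for a scalar one-form $\beta$ and a section $s$. Then I would check that the right-hand side is $C^{\infty}(\mathcal{B})$-bilinear, so that it really defines a two-form.

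Next, substitute $\alpha=id$, so that $\alpha(X)=X$. This gives
\[
(d_{\nabla}id)(X,Y)=\nabla_{X}Y-\nabla_{Y}X-[X,Y]=T^{\nabla}(X,Y),
\]
where $T^{\nabla}$ is the torsion tensor. Hence $d_{\nabla}(id)=T^{\nabla}$ as $T\mathcal{B}$-valued two-forms, and $d_{\nabla}(id)=0$ holds if and only if $T^{\nabla}=0$, which is exactly the torsion-free condition.

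As a consistency check, I would redo the computation in local coordinates $x^{i}$. Writing $id=dx^{i}\otimes\partial_{i}$, we get
\[
d_{\nabla}(id)=-dx^{i}\wedge\Gamma^{k}_{ji}\,dx^{j}\otimes\partial_{k}=\tfrac12\big(\Gamma^{k}_{ij}-\Gamma^{k}_{ji}\big)\,dx^{i}\wedge dx^{j}\otimes\partial_{k}.
\]
This vanishes if and only if the Christoffel symbols are symmetric in their lower indices. The only real obstacle is the sign and ordering convention for $d_{\nabla}$ on bundle-valued forms, so I would fix the convention at the start and verify that it matches the invariant formula. The conclusion itself does not depend on the convention, since any convention changes $d_{\nabla}(id)$ only by an overall sign.
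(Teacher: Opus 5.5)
Your proposal is correct and matches the paper's proof: the paper likewise applies the invariant formula for $d_{\nabla}$ on a $T\mathcal{B}$-valued one-form to $id$ and reads off $\nabla_{X}Y-\nabla_{Y}X-[X,Y]$, which is the torsion $T^{\nabla}(X,Y)$. Your Christoffel-symbol check in local coordinates is an extra confirmation that the paper does not include, and it is consistent with the invariant formula.
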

\begin{proof}
 \[d_{\nabla}(id)(X,Y)=(-1)^{1+1}\nabla_{X}id(Y)+(-1)^{1+2}\nabla_{Y}id(X)+(-1)^{1+2}id([X,Y])\]\[=\nabla_{X}Y-\nabla_{Y}X-[X,Y]\]
\end{proof}
 
Given a flat local framing $\{\xi_{\alpha}\}$ of $\mathcal{B}$, denote by $\{\eta^{\beta}\}$ the dual coframe, then $id=\sum_{\alpha}\xi_{\alpha}\otimes\eta^{\alpha}$, and $d_{\nabla}(id)=0$ implies that $d_{\nabla}\eta^{\alpha}=0$. By Poicar\'e lemma, there exist local coordinates $u^{\alpha}$ such that $du^{\alpha}=\eta^{\alpha}$, and these local coordinates satisfy the flat condition $\nabla du^{i}=0$. These coordinates are therefore called \textit{flat coordinates}, or by its connection to $\mathbb{Z}$-affine structure (see proposition \ref{pro:2.1}), they are $\mathbb{Z}$-\textit{affine coordinates}. This motivates the following definition:
 
 \begin{definition}\label{def:2.4}
 Let $(\mathcal{B},J,\nabla,\omega)$ be a special K\"ahler manifold. An affine local coordinates $(x^{i},y_{j})$ is called real special coordinates if equation (1) holds. Holomorphic coordinates $\{z_{i}\}$ is called special if its real part is real affine, i.e. $\nabla Re(dz^{i})=0$. We say that the special coordinates $z_{i}$ and the real special coordinates $\{x^{i},y_{j}\}$ are adapted if $Re(z^{i})=x^{i}$. Further more, two set of special coordinates $\{z^{i}\}$ and $\{w_{j}\}$ are said to be conjugate if there exists real special coordinates $\{x^{i},y_{j}\}$ such that $Re(z^{i})=x^{i}$ and $Re(w_{j})=y_{j}$.
\end{definition}
 
It is proved in \cite{freed1999special} that if$\{z^{i}\}$ and $\{w_{j}\}$ are conjugate coordinates which are adapted to the special coordinates $\{x^{i},y_{j}\}$, then their exists a local holomorphic function $\mathfrak{F}$, called the \textbf{holomorphic prepotential}, such that 
\begin{equation}
w_{j}=\frac{\partial\mathfrak{F}}{\partial z^{j}}
\end{equation}

 Explicit example will be given  in section 3.2 when discussing the complex integrable system with central charge.

\section{Complex integrable system and wall-crossing structure}\label{sec:3}

The complex integrable system can be viewed as Lagrangian torus fibration over the base $\mathcal{B}$. The $\mathbb{Z}$-affine structure and its dual on $\mathcal{B}$, together with the associated Hesse potentials, can be used to endow $\mathcal{B}$ with the structure of \textit{Monge-Amp$\grave{e}$re manifold} in two different ways naturally dual to each other under \textit{Legendre transformation}. This had been indicated by Kontsevich and Soibelman in \cite{kontsevich2001homological} in studying the homological mirror symmetry as the duality of torus fibration. 

A \textit{complex integrable system} (see for example section 4 of \cite{kontsevich2014wall}) is defined to be a holomorphic surjective map
\begin{equation}
    \pi:(X,\omega^{2,0})\longrightarrow \mathcal{B}
\end{equation} with regular fibers being holomorphic Lagrangian submanifolds. Here $(X,\omega^{2,0})$ is a holomorphic symplectic manifold of complex dimension $2n$, and the base $\mathcal{B}$, is a complex manifold of half-dimension. Denote by $\mathcal{B}^{0}\subset\mathcal{B}$ the dense open subset over which the fibers of $\pi$ being regular, and by \[\Delta=\mathcal{B}^{sing}:=\mathcal{B}\backslash\mathcal{B}^0\] the \textbf{discriminant locus} over which $\pi$ becomes degenerate. Assume that the fibres of $\pi$ are compact, so they are actually holomorphic Lagrangian tori.

\begin{figure}[h]
    \centering
    \includegraphics[height=2.8cm, width=5.3cm]{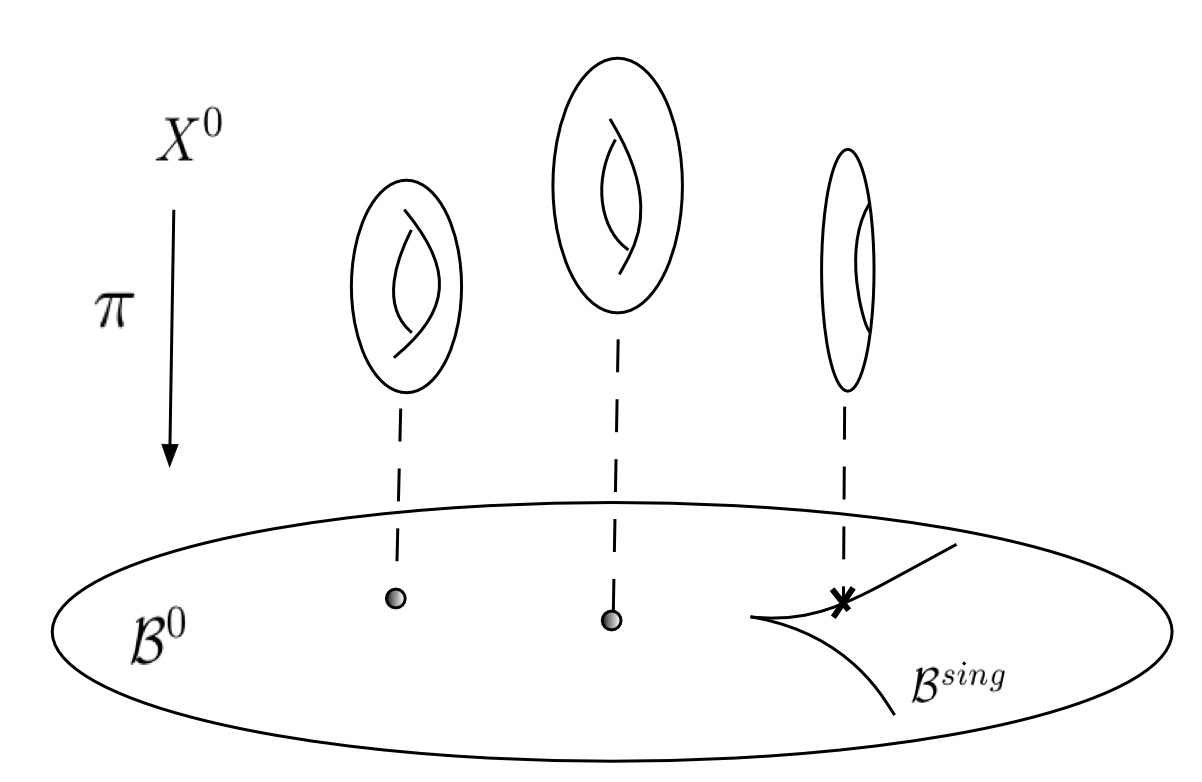}
    \caption{Complex integrable system}
\end{figure}

It is well known that the (smooth) base $\mathcal{B}^{0}$ of the complex integrable system $\pi$ is a $\mathbb{Z}$-affine manifold.

\subsection{Action-angle coordinates and central charges}

Viewing the integrable system as the real one by considering the symplectic form $\omega:=Re(\omega^{2,0})$. It is well-known that there exist action coordinates: $\{I^{1},\cdots,I^{n}\}$. Together with fiber coordinates $\{\theta_{1},\cdots,\theta_{2n}\}$ on the affine tori $\mathbb{T}^{2n}$, they form the \textit{action-angle coordinates}. In these coordinates, the symplectic form $\omega$ can be written in canonical form as 
  \begin{equation}
  \omega=\sum_{i}dI^{i}\wedge d\theta_{i}
  \end{equation}
We see that $\omega=d\alpha_{can}$, where $\alpha_{can}:=\sum_{i}I^{i}\,d\theta_{i}$ is the Liouville one form. And the fibrations is given by projection \[\pi:(I^{1},\cdots,I^{2n},\theta_{1},\cdots,\theta_{2n})\mapsto(I^{1},\cdots,I^{2n})\]

Denote by $\{\gamma_{i}\}$ the basis for $H_{1}(\pi^{-1}(b),\mathbb{Z})$ dual to $\{d\theta_{i}\}$, i.e., $\oint_{\gamma_{i}}d\theta_{j}=\,\delta_{ij}$. Then we have that
\begin{equation}
I^{i}=\oint_{\gamma_{i}}\alpha_{can}
\end{equation}

\begin{definition}
The local system of lattices $\underline{\Gamma}$ over $\mathcal{B}^{0}$, with stalk $\underline{\Gamma}_{b}$ being $H_{1}(\pi^{-1}(b),\mathbb{Z})$ is called the \textbf{charge lattice}.
\end{definition}

The action coordinates $\{I^{i}\}$ extends to an additive map $I$ from $\underline{\Gamma}$ to $\mathbb{R}$. We now show that the complex analogy of $I$ in will be the central charge associated to a complex integrable system. First, the following holomorphic forms are closed 
\begin{equation}\label{eq:5}
\alpha_{i}:=\oint_{\gamma_{i}}\omega^{2,0}
\end{equation}
So there exist local holomorphic functions $z_{i}$ defined up to an additive constant such that $dz_{i}=\alpha_{i}\,\, \,\text{for}\,\, 1\leq i\leq 2n.$
The set $\{z_{1},\cdots,z_{2n}\}$ are complex analogy of the action coordinates $\{I^{1},\cdots,I^{2n}\}$, and they define an holomorphic map in an open neighborhood $U$ of $b$:
\begin{equation}\label{eq:6}
(z_{1},\cdots,z_{2n}): U\longrightarrow\mathbb{C}^{2n}
\end{equation}

\begin{definition}
(\cite{kontsevich2014wall}) The collection of one forms $\alpha_{i}$ defined in (\ref{eq:5}) give rise to an element $\delta\in H^{1}(\mathcal{B}^{0},\underline{\Gamma}^{\vee}\otimes\mathbb{C})$, if $\delta=0$, there exists $Z\in\Gamma(\mathcal{B}^{0},\underline{\Gamma}\otimes\mathscr{O}_{\mathcal{B}^{0}})$, called the \textbf{central charge}, which satisfies $dZ(\gamma_{i})=\alpha_{i}$, for $1\leq i\leq2n$.
\end{definition}

\subsection{Special coordinates induced from the central charge}

Assuming that the fiber of our complex integrable system is endowed with a covariantly constant integer polarization. We call such integral system \textit{polarized complex integrable system}. A polarization gives a covariantly constant skew-symmetric bilinear form on $\underline{\Gamma}\otimes\mathbb{Q}$, i.e.,
\[\langle\cdot,\cdot\rangle: \Lambda^{2}\,\underline{\Gamma}\to\underline{\mathbb{Z}}\,_{\mathcal{B}^{0}}\]

Denote by $(\omega_{ij})$ the matrix $(\langle\gamma_{i},\gamma_{j}\rangle)$, and by $(\omega^{ij})$ its inverse. We have the following proposition given in \cite{kontsevich2014wall}:

\begin{proposition}
  a) $\sum_{i,j}\omega^{ij}dz_{i}\wedge dz_{j}=\sum_{i,j}\omega^{ij}\alpha_{i}\wedge\alpha_{j}=0.$
  \[b)\,\sqrt{-1}\sum_{i,j}\omega^{ij}dz_{i}\wedge d\bar{z}_{j}=\sqrt{-1}\sum_{i,j}\omega^{ij}\alpha_{i}\wedge\bar{\alpha}_{j} >0.\]
\end{proposition}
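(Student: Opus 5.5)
The approach is to interpret $\sum_{i,j}\omega^{ij}\alpha_i\wedge\alpha_j$ as the pullback of $\omega^{2,0}$ restricted to the base. More precisely, it arises from a fiberwise pairing built from the polarization. The key tool is a relative version of the Riemann bilinear relations on the torus fibers. Work locally over a small ball $U\subset\mathcal{B}^0$, and trivialize the fibration symplectically via action-angle type coordinates, so that $\pi^{-1}(U)\cong U\times T$ with $T$ a complex torus. On the fibre, the classes $\gamma_i$ have Poincaré dual classes in $H^1(T,\mathbb{Z})$. The inverse matrix $(\omega^{ij})$ is precisely what expresses the dual of the polarization form on cohomology: if $e^i\in H^1(T)$ is the basis dual to $\{\gamma_i\}$, then the polarization class is $\Theta=\tfrac12\sum\omega^{ij}e^i\wedge e^j$ (up to sign convention), which pairs with $H_1$ via $\langle\cdot,\cdot\rangle$.

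Step one is to decompose $\omega^{2,0}$ on $\pi^{-1}(U)$. Since the fibers are Lagrangian, $\omega^{2,0}$ has no purely vertical part. Its mixed part, contracted over $\gamma_i$, gives $\alpha_i$. So in cohomology of the fiber we have $\iota_v\omega^{2,0}|_{T}=\sum_i \alpha_i(v)\,e^i$ for a tangent vector $v$ on the base. Hence the $1$-form-valued class $[\omega^{2,0}]_{\mathrm{mixed}}=\sum_i\alpha_i\otimes e^i$.

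Step two proves a). Pair this class with itself using the polarization:
\[
\sum_{i,j}\omega^{ij}\alpha_i\wedge\alpha_j=\Big\langle \sum_i\alpha_i e^i,\sum_j\alpha_j e^j\Big\rangle .
\]
The vanishing should follow from the compatibility of the polarization with the holomorphic symplectic form, namely that the fibers are Lagrangian for $\omega^{2,0}$ and the polarization is covariantly constant. Concretely, I would show that the vanishing is the infinitesimal form of the first Riemann bilinear relation. The map $b\mapsto H^{1,0}(T_b)$ gives a period map into a Lagrangian Grassmannian for $\langle\,,\rangle$. The forms $\alpha_i$ are, via $\omega^{2,0}$, identified with the derivative of this period map: $\omega^{2,0}$ identifies $T_b\mathcal{B}$ with $H^0(T_b,\Omega^1)^*\cong H^{1,0}$. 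So $\sum\alpha_ie^i$ takes values in the holomorphic (isotropic) subspace $H^{1,0}(T_b)$, and the wedge of two such values pairs to zero by $\int_T\eta\wedge\eta'\wedge\Theta^{n-1}=0$ for $(1,0)$-forms $\eta,\eta'$.

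Step three proves b) in the same way, using the second Riemann bilinear relation. For a nonzero $(1,0)$-form $\eta$ on the polarized abelian variety, $\sqrt{-1}\,\langle\eta,\bar\eta\rangle>0$. Evaluating $\sqrt{-1}\sum\omega^{ij}\alpha_i\wedge\bar\alpha_j$ on $(v,Jv)$ reduces to this quantity with $\eta=\iota_v\omega^{2,0}|_T$. This $\eta$ is nonzero because $\omega^{2,0}$ is nondegenerate and the fiber is Lagrangian. The main obstacle is step two: making precise the identification of $\sum\alpha_ie^i$ with $H^{1,0}$-valued forms, and checking that the inverse matrix $\omega^{ij}$ (rather than $\omega_{ij}$) is the correct pairing on $H^1$. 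I would fix sign and normalization conventions once, on a single elliptic curve ($n=1$), and then extend them.
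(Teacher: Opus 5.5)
The paper gives no proof of this proposition. It only quotes it from Kontsevich--Soibelman, so there is no argument in the paper to compare yours with.

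Your reduction is the standard one and is sound. For $v\in T^{1,0}_b\mathcal{B}$ with a projectable lift $\tilde v$, the form $\eta_v=\iota_{\tilde v}\omega^{2,0}|_{F_b}$ is closed (by Cartan's formula, $d\omega^{2,0}=0$, and the fibres being Lagrangian) and of type $(1,0)$, hence holomorphic. Its periods are the $\alpha_i(v)$. So $dZ_b$ maps $T_b\mathcal{B}$ isomorphically onto $H^{1,0}(F_b)\subset H^1(F_b,\mathbb{C})$, and (a), (b) are the two Riemann conditions of the polarization, read on $H^1$.

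Some side statements are off, though none is load-bearing:
\begin{itemize}
\item The $e^i$ are the Kronecker-dual basis, not Poincar\'e duals; those lie in $H^{2n-1}$.
\item $\omega^{2,0}$ identifies $T_b\mathcal{B}$ with $H^0(F_b,\Omega^1)$ itself, not with its dual.
\item The period map $b\mapsto H^{1,0}(F_b)=\mathrm{im}\,dZ_b$ is the Gauss map of $Z$. The $\alpha_i$ are the components of $dZ$, not of the derivative of the period map.
\end{itemize}

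What is genuinely missing is the step you flagged, and your formula for $\Theta$ is wrong. Since $\omega_{ij}=\langle\gamma_i,\gamma_j\rangle=\Theta(\gamma_i,\gamma_j)$, one has $\Theta=\tfrac12\sum\omega_{ij}e^i\wedge e^j$, with lower indices. The inverse matrix enters only through the identity
\[
\int_{F_b}\eta\wedge\eta'\wedge\Theta^{n-1}=c_n\sum_{i,j}\omega^{ij}\eta_i\eta'_j,\qquad c_n=\pm(n-1)!\,d_1\cdots d_n\neq 0,
\]
where the $d_k$ are the elementary divisors. This is a one-line check in a Frobenius basis $\Theta=\sum_k d_k\,e^k\wedge e^{n+k}$. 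Both sides vanish on every pair of basis vectors except $\{e^k,e^{n+k}\}$, where they equal $(n-1)!\prod_{l\neq k}d_l$ and $-1/d_k$ respectively. An elliptic curve alone cannot settle this for $n>1$.

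With this identity in hand, (a) follows because $\eta\wedge\eta'\wedge\Theta^{n-1}$ has type $(n+1,n-1)$, $\Theta$ being a $(1,1)$-class. Part (b) is then Hodge--Riemann.

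Alternatively, you can skip $\Theta^{n-1}$ entirely. $J$-invariance of $E=\langle\cdot,\cdot\rangle$ on $H_1(F_b,\mathbb{R})$ makes the dual form $E^{-1}$ on $H^1$ $J^*$-invariant; its matrix in the basis $e^i$ is $-\omega^{ij}$. This forces $E^{-1}(\xi,\xi')=-E^{-1}(\xi,\xi')=0$ for $\xi,\xi'\in H^{1,0}$. For $\xi=E(u,\cdot)-\sqrt{-1}\,E(u,J\cdot)\in H^{1,0}$ one gets $\sqrt{-1}\,E^{-1}(\xi,\bar\xi)=2E(u,Ju)$.

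That last formula also shows you must be careful in Step three.
\begin{itemize}
\item As literally written, (b) is ill-typed. Since $\omega^{ij}$ is real and skew, $\sum\omega^{ij}\alpha_i\wedge\bar\alpha_j$ is already real, with value $-2\sqrt{-1}\sum\omega^{ij}\eta_i\bar\eta_j$ on $(v,Jv)$, where $\eta_i=\alpha_i(v)$. So (b) has to mean positivity of the Hermitian form $\sqrt{-1}\sum\omega^{ij}\alpha_i\otimes\bar\alpha_j$.
\item Its sign is a convention. If $\langle\gamma,\gamma'\rangle$ is the value of the positive class on $(\gamma,\gamma')$ and $(\omega^{ij})$ is literally the inverse matrix, then $\sqrt{-1}\sum\omega^{ij}\xi_i\bar\xi_j=-2E(u,Ju)<0$. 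For example, on $\mathbb{C}/(\mathbb{Z}+\tau\mathbb{Z})$ with $\gamma_1=1$, $\gamma_2=\tau$ and $\eta=dz$, it equals $-2\,\mathrm{Im}\,\tau$.
\end{itemize}
So your claim $\sqrt{-1}\langle\eta,\bar\eta\rangle>0$ holds only after $\langle\cdot,\cdot\rangle$ is normalised with the opposite sign, or $\omega^{ij}$ is read as the matrix of $E^{-1}$. That is exactly what your convention check has to decide.
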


\begin{proposition}
  The condition a) in the above proposition implies that the image of the map (6) is a Lagrangian sub-manifold, while the condition b) implies that the map is an immersion. 
\end{proposition}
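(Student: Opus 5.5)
The plan is to read the map (\ref{eq:6}) as an immersion into $\mathbb{C}^{2n}$ equipped with the constant holomorphic symplectic form
\[\Omega:=\sum_{i,j}\omega^{ij}\,dZ_{i}\wedge dZ_{j},\]
where $Z_{1},\dots,Z_{2n}$ are the linear coordinates on $\mathbb{C}^{2n}$. This form is nondegenerate because $(\omega_{ij})$, and hence its inverse $(\omega^{ij})$, is invertible. Here the rank of $\underline{\Gamma}$ is $2n$ while $\dim_{\mathbb{C}}\mathcal{B}=n$, so the image of an immersion has half the dimension of $\mathbb{C}^{2n}$, which is the right dimension for a Lagrangian. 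Under the map $z=(z_{1},\dots,z_{2n})$ we have $z^{*}dZ_{i}=dz_{i}=\alpha_{i}$. Both statements then reduce to pointwise linear algebra on $T_{b}\mathcal{B}$, combined with parts a) and b) of the preceding proposition.

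\textbf{Lagrangian part.} The pullback is
\[z^{*}\Omega=\sum_{i,j}\omega^{ij}\,dz_{i}\wedge dz_{j},\]
and this vanishes by part a). So the restriction of $\Omega$ to $dz(T_{b}\mathcal{B})$ is zero. Once we know $z$ is an immersion, this image has complex dimension $n$, half of $2n$. An isotropic subspace of half dimension is Lagrangian, so the (local) image of (\ref{eq:6}) is a Lagrangian submanifold.

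\textbf{Immersion part.} We must show $dz_{b}$ is injective on the holomorphic tangent space. Suppose $dz_{b}(v)=0$ for some $v\in T^{1,0}_{b}\mathcal{B}$. Then $\alpha_{i}(v)=0$ for every $i$. Consider the Hermitian form
\[h(v,w):=\sqrt{-1}\sum_{i,j}\omega^{ij}\alpha_{i}(v)\overline{\alpha_{j}(w)},\]
which is exactly the $(1,1)$-form of part b) evaluated on $(v,\bar w)$. Evaluating it at $(v,\bar v)$ gives $0$. Part b) says this form is positive definite, so $v=0$. Hence $dz_{b}$ is injective and the map is a holomorphic immersion.

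\textbf{Main obstacle.} The delicate step is interpreting the positivity in part b) correctly. It must be read as strict positive definiteness of the associated Hermitian form on $T^{1,0}\mathcal{B}$, not merely semi-positivity, and sign and normalization conventions (for instance $\sqrt{-1}\,dz\wedge d\bar z>0$) must be fixed consistently. I would fix this convention first, then check that $h(v,v)$ is indeed the evaluation of the form in b). Once that identification is in place, both parts are formal.
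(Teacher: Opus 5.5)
Your proof is correct and takes the same route as the paper. For the Lagrangian claim you pull back $\Omega=\sum_{i,j}\omega^{ij}du^{i}\wedge du^{j}$ and use a); for the immersion claim you use the positivity in b). Your kernel argument ($dz_b(v)=0\Rightarrow h(v,v)=0\Rightarrow v=0$) together with the half-dimension count is simply a precise version of the paper's brief remark that the map is ``locally a map between metric spaces, thus injective.''
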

\begin{proof}
The symplectic form on $\underline{\Gamma}^{\vee}\otimes\mathbb{C}\cong\mathbb{C}^{2n}$ is given by $\Omega=\sum_{i,j}\omega^{ij}du^{i}\wedge du^{j}$, where $\{u^{1},\cdots,u^{2n}\}$ are the coordinates for $\mathbb{C}^{2n}$. a) implies that the image of (\ref{eq:6}) is Lagrangian with respect to $\Omega$. Using the K\"ahler metric induced by the 1-1 form in b), the map (\ref{eq:6}) is locally a map between metric spaces, thus injective. 
\end{proof}

\begin{remark}
In terms of central charge $Z$, the condition a) and b) in proposition 3.1 can be rewritten as
\begin{equation}\label{eq:7}
a)^{\prime} \,\,\,\langle dZ,dZ\rangle=0\,\,\,(\text{transversality})
\end{equation}
\begin{equation}\label{eq:8}
b)^{\prime}\,\,\, \sqrt{-1}\,\langle dZ,d\bar{Z}\rangle>0\,\,\,(\text{nondegeneracy})
\end{equation}
\end{remark}
Choosing a \textit{symplectic basis} $\{\alpha^{i},\beta_{j}\}$ for $\underline{\Gamma}$, and $\{\alpha_{i},\beta^{j}\}$ the dual basis for $\underline{\Gamma}^{\vee}_{b}$,
we write the central charge $Z$ as
\begin{equation}
Z=\sum_{i=1}^{n}a^{i}\alpha_{i}+\sum_{i=1}^{n}a_{D,i}\beta^{i}
\end{equation}
where $a^{i}$, $a_{D,i}\in\mathscr{O}_{\mathcal{B}^{0}}$ are locally defined holomorphic function on $\mathcal{B}^{0}$. The condition (\ref{eq:7}) becomes $d\left(\sum_{i}a_{D,i}\,da^{i}\right)=0$, from which $\sum_{i}a_{D,i}\,da^{i}=d\mathcal{F}$ for some holomorphic function $\mathcal{F}$, called the \textbf{prepotential}. Consequently, We have
\begin{equation}
a_{D,i}=\frac{\partial\mathcal{F}}{\partial a^{i}},\,\,\,i=1,\cdots,n.
\end{equation}
By definition \ref{def:2.4}, we say that the special coordinates $\{a^{i}\}$ and $\{a_{D,i}\}$ form a conjugate pair.

The condition (\ref{eq:8}) becomes: $Im\left(\sum_{i}d{a}_{D,i}\wedge d\bar{a}^{i}\right)>0$, thus the K\"ahler metric on $\mathcal{B}^{0}$ is given as
\begin{equation}\label{eq:11}
    g_{\mathcal{B}^{0}}=Im\left(\sum_{i}d{a}_{D,i}\, d\bar{a}^{i}\right)=Im\left(\sum_{i}d\left(\frac{\partial\mathcal{F}}{\partial a^{i}}\right)\, d\bar{a}^{i}\right)=Im\left(\sum_{i,j}\frac{\partial^{2}\mathcal{F}}{\partial a^{i}\partial a^{j}}\, da^{i}d\bar{a}^{j}\right)
\end{equation}

with the associated \textit{K\"ahler potential} $K=Im\left(\sum_{i}a_{D,i}\,\bar{a}^{i}\right)$. And by definition \ref{def:2.4}, we see that \[x^{i}:=Re(a^{i})\qquad y_{j}:=Re(a_{D,j})\qquad 1\leq i,j\leq n.\] are $\mathbb{Z}$-affine coordinates on the base $\mathcal{B}^{0}$ that are adapted to the special coordinates $a^{i}$ and $a_{D,j}$.

\begin{remark}\label{re3.2}
We can also use $\{Im\, a^{i},Im\,a_{D,j}\}$ as $\mathbb{Z}$-affine coordinates. These define the \textbf{dual affine structure} on $\mathcal{B}^{0}$. It differs from the adapted ones by $\frac{\pi}{2}$-rotation of the central charge $Z$. i.e.,
\begin{equation}\label{eq:12}
    Re\left(e^{-\frac{i\pi}{2}}Z(\gamma_{i})\right)=Im\,Z(\gamma_{i})
\end{equation}
More generally, we consider the rotated central charge $e^{-i\theta}Z_{\gamma}$ by arbitrary angle $\theta\in\mathbb{R}/2\pi\mathbb{Z}$. This would yield a $S^{1}$-family of $\mathbb{Z}$-affine coordinates $\left\{Re(e^{-i\theta}Z(\gamma_{i}))\right\}$ or $\left\{Im(e^{-i\theta}Z(\gamma_{i}))\right\}$.
\end{remark}

\begin{remark}
The rotation of the $\mathbb{Z}$-affine structure is equivalent to rotating the tori $T^{\mathbb{Z}}\mathcal{B}^{0}\otimes\mathbb{R}$, which can be encoded by the cohomology class in $H^{1}(\mathcal{B}^{0},T^{\mathbb{Z}}\mathcal{B}^{0}\otimes i\mathbb{R}/\mathbb{Z}).$ (c.f. \cite{kontsevich2011lectures}).
\end{remark}

\subsection{Monodromy near the discriminant locus}\label{sec:3.3}

Near the discriminant locus $\Delta=\mathcal{B}\backslash\mathcal{B}^{0}$, the Lagrangian fibration $\pi:(X^{0},\omega^{2,0})\to\mathcal{B}^{0}$ degenerates, and some cycles, called the \textit{vanishing cycles}, shrink to zero. This corresponds to the singularities of the $\mathbb{Z}$-affine structure. For the local system $\underline{\Gamma}\to\mathcal{B}^{0}$, the monodromy representation of $\pi_{1}(\mathcal{B}^{0},b)$ into $GL(2n,\mathbb{R})\ltimes\mathbb{R}^{2n}$ is given through
\[\widetilde{\rho}:\pi_{1}(\mathcal{B}^{0},b) \to GL(2n,\mathbb{R})\ltimes\mathbb{R}^{2n},\qquad\gamma\longmapsto\widetilde{\gamma}(1)\]

where $\widetilde{\gamma}$ is the lifting of the loop $\gamma$. The \textit{monodromy group} $M_{b}(\widetilde{\rho})$ based at $b$ is the image of $\widetilde{\rho}$.

\begin{proposition}
(\cite{kontsevich2014wall}) $\widetilde{\rho}$ is equivalent to the data: a) A local system over $\mathcal{B}^{0}$ and\,\,\,b) $\delta\in H^{1}(\mathcal{B}^{0},T^{\mathbb{Z}}\mathcal{B}^{0}\otimes_{\mathbb{Z}}\mathbb{R})$.
\end{proposition}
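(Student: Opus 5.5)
The plan is to use the standard decomposition of an affine representation into its linear part and a translation cocycle. Write $\widetilde{\rho}(\gamma)=(A_\gamma,b_\gamma)$ with $A_\gamma\in GL(2n,\mathbb{R})$ and $b_\gamma\in\mathbb{R}^{2n}$. The homomorphism property of $\widetilde{\rho}$ for the semidirect product, $(A,b)(A',b')=(AA',Ab'+b)$, splits into two pieces:
\begin{enumerate}
\item $\gamma\mapsto A_\gamma$ is a representation $\rho:\pi_1(\mathcal{B}^0,b)\to GL(2n,\mathbb{R})$;
\item $b_{\gamma\gamma'}=A_\gamma b_{\gamma'}+b_\gamma$, i.e.\ $b$ is a $1$-cocycle in group cohomology with coefficients in the $\pi_1$-module $\mathbb{R}^{2n}$ twisted by $\rho$.
\end{enumerate}
For a $\mathbb{Z}$-affine base the linear part preserves the lattice $T^{\mathbb{Z}}_b\mathcal{B}^0$ (Proposition \ref{pro:2.1}). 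So $\rho$ lands in $GL(2n,\mathbb{Z})$, and via the usual correspondence between representations of $\pi_1$ and locally constant sheaves it is exactly the local system $T^{\mathbb{Z}}\mathcal{B}^0$ of item a).

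Next I would identify the translation part with $\delta$. Conjugating $\widetilde{\rho}$ by a translation $(1,v)$, which amounts to changing the base point of the affine developing map, replaces $b_\gamma$ by $b_\gamma+A_\gamma v - v$. This is precisely modification by a coboundary. Hence equivalence classes of $\widetilde{\rho}$ with fixed linear part correspond to classes in $H^1(\pi_1(\mathcal{B}^0),\mathbb{R}^{2n}_\rho)$.

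The remaining step is to pass from group cohomology to sheaf cohomology, giving
\[H^1(\pi_1(\mathcal{B}^0),\mathbb{R}^{2n}_\rho)\cong H^1(\mathcal{B}^0,T^{\mathbb{Z}}\mathcal{B}^0\otimes_{\mathbb{Z}}\mathbb{R}).\]
In degree one this holds without asphericity: $H^1$ of a locally constant sheaf $L$ on a connected, locally simply connected space classifies $L$-torsors, and these are the same as crossed homomorphisms modulo principal ones.

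Concretely, I would build the class directly from the geometry. On a good cover $\{U_i\}$, choose local affine developing maps $\Phi_i$. On overlaps they differ by affine maps $\Phi_j\circ\Phi_i^{-1}$. Fixing the linear parts via the local system, the translation parts $c_{ij}$ form a Čech $1$-cocycle valued in $T^{\mathbb{Z}}\mathcal{B}^0\otimes\mathbb{R}$. Its class is $\delta$, and transporting around a loop recovers $b_\gamma$. This class is the same type of object as the $\delta$ arising from the periods $\alpha_i$, now with real translations.

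For the converse, given a local system and a class $\delta$, I would choose a representing cocycle and assemble the affine representation $(A_\gamma,b_\gamma)$. The construction is well defined up to conjugation, which matches the coboundary ambiguity.

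The main obstacle is bookkeeping rather than depth. One must make precise what "equivalent" means, namely conjugacy in the affine group with the linear part fixed up to local-system isomorphism. One must also check that the Čech/torsor description matches the crossed-homomorphism description with consistent sign and base-point conventions. I expect the torsor identification for $H^1$ to be the cleanest route through this.
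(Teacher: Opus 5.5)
Your argument is correct, and it takes a different and more complete route than the paper. The paper gives no proof beyond citing Kontsevich--Soibelman. Its only accompanying content is the remark that $\delta$ is the class, in the de Rham cohomology of the flat bundle $(T\mathcal{B}^{0},\nabla)$, of the tautological form $id=\sum_i\frac{\partial}{\partial u^i}\otimes du^i$. That form is $d_{\nabla}$-closed exactly because $\nabla$ is torsion-free (Lemma \ref{lem:2.2}).

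You instead do three things:
\begin{itemize}
\item split $\widetilde{\rho}$ into its linear part and a translation crossed homomorphism;
\item match coboundaries with conjugation by translations;
\item pass from group cohomology to sheaf cohomology via torsors.
\end{itemize}

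The two pictures meet in your \v{C}ech paragraph. Read an affine chart $(u^a)$ on $U_i$ as the section $s_i=\sum_a u^a\partial_{u^a}$ of $T\mathcal{B}^0$. Then $\nabla s_i=id$, so $s_i$ is a local $d_\nabla$-primitive of the paper's representative. The difference $s_j-s_i$ is flat, and its components in the frame of the $j$-th chart are the translation part of $\Phi_j\circ\Phi_i^{-1}$. This is the de Rham--\v{C}ech isomorphism applied to $[id]$. It is also the clean way to make your phrase ``fixing the linear parts via the local system'' precise. Writing $\Phi_j\circ\Phi_i^{-1}=(A_{ji},c_{ji})$, the raw translation parts only satisfy the twisted relation $c_{ki}=A_{kj}c_{ji}+c_{kj}$.

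What your route buys is an actual proof of the equivalence in both directions, with the ambiguities identified: coboundaries correspond to conjugation by translations, and isomorphisms of the local system correspond to conjugation by $GL(2n,\mathbb{Z})$. What the paper's route buys is a canonical, chart-free representative. It shows that $\delta$ is intrinsic to $(\nabla,T^{\mathbb{Z}}\mathcal{B}^0)$ and that its closedness is literally torsion-freeness.

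Your aside about the period class can also be made exact. Identify $T\mathcal{B}^0$ with $\underline{\Gamma}^{\vee}\otimes\mathbb{R}$ through the affine coordinates $Re(e^{-i\theta}Z(\gamma_i))$. This turns $id$ into the real part of the rotated period form $e^{-i\theta}\alpha$, so $\delta$ is the image of the class defined by the $\alpha_i$. In particular $\delta$ vanishes whenever the central charge exists globally; for example, the Ooguri--Vafa affine monodromy is purely linear.
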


\begin{remark}
Since $\nabla$ is flat, and $T^{\mathbb{Z}}\mathcal{B}^{0}$ is locally constant, the deRham representation of $\delta$ (still denoted by $\delta$) should be the identity section $id\in \Omega^{1}(\mathcal{B}^{0},T\mathcal{B}^{0})$, i.e. $\delta(v)=v$ for every $v\in T\mathcal{B}^{0}$, more explicitly, we have \[\delta=id=\sum_{i}\frac{\partial}{\partial u^{i}}\otimes du^{i}\]
which is obviously closed by the torsion-freeness of the connection $\nabla$ (see the lemma \ref{lem:2.2}).
\end{remark}

We assume (\textbf{$A^{1}$-singularity assumption}) (section 4.5 of \cite{kontsevich2014wall}) that the singularity $\Delta$ of the $\mathbb{Z}$-affine structure is a two dimensional singularity times $\mathbb{R}^{2n-2}$, and the singularity in dimension two is the simplest possible one---the so called \textit{focus-focus singularity}, i.e., the singular fiber being the \textit{pinched torus}. It is modelled on the \textit{Ooguri-Vafa space}\cite{chan2010ooguri}:

\[\mathcal{B}=\{u\in\mathbb{C}:|u|<\Lambda\}\quad \Delta=\{0\}\]

The charge lattices $\underline{\Gamma}$ is spanned by $\{\gamma_{m}$, $\gamma_{e}\}$, with $\langle\gamma_{e},\gamma_{m}\rangle=1$. The monodromy around the origin is given by 

\[\gamma_{e}(u)\mapsto \gamma_{e}(u),\,\,\,\,
\gamma_{m}(u)\mapsto \gamma_{m}(u)+\gamma_{e}(u).\]

\begin{figure}[h]
    \centering
    \includegraphics[height=3.1cm, width=4.8cm]{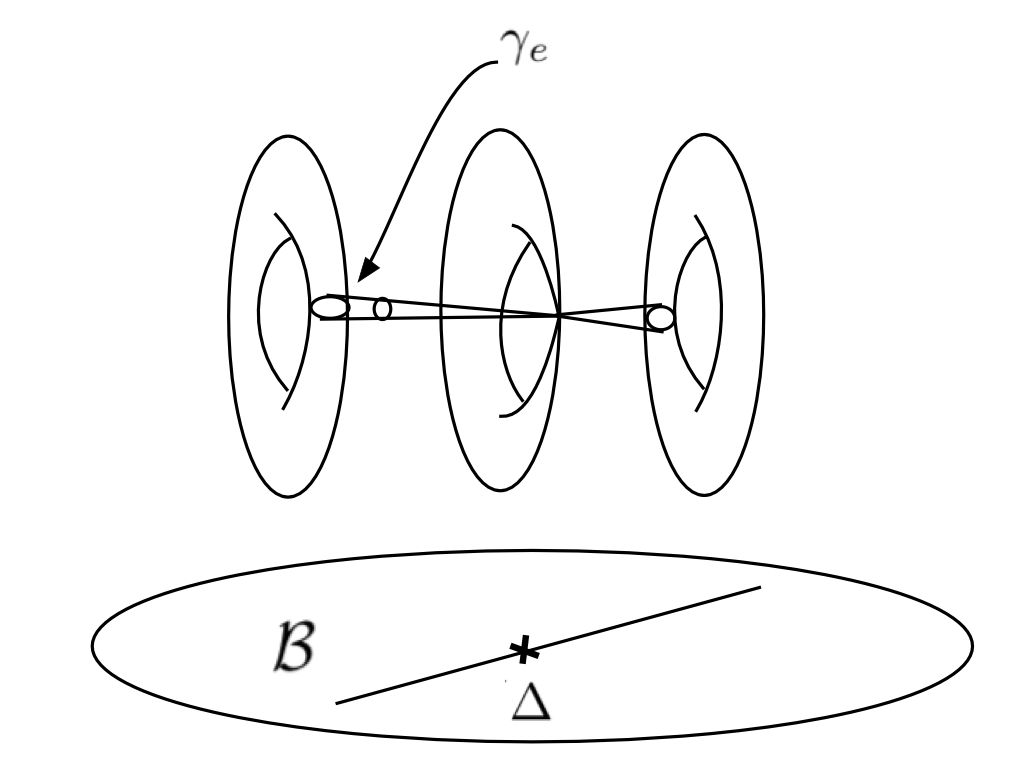}
    \caption{Ooguri-Vafa space}
\end{figure}

\begin{proposition}
The central charge for Ooguri-Vafa space given below satisfies the above monodromy
\begin{equation}\label{eq:13}
Z_{u}(\gamma_{e})=u;\,\,\,\,\,Z_{u}(\gamma_{m})=\frac{1}{2\pi i}\left(u\log\frac{u}{\Lambda}-u\right).
\end{equation}
\end{proposition}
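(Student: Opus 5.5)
The plan is to analytically continue the two central charges in (\ref{eq:13}) along a loop around the origin $u=0$ of the Ooguri--Vafa base, and compare the result with the monodromy $\gamma_{e}\mapsto\gamma_{e}$, $\gamma_{m}\mapsto\gamma_{m}+\gamma_{e}$. Take the loop $u(t)=u_{0}e^{2\pi i t}$, $t\in[0,1]$, inside $\{0<|u|<\Lambda\}$. Because $Z$ is a section of $\underline{\Gamma}^{\vee}\otimes\mathscr{O}_{\mathcal{B}^{0}}$ (pairing with the charge lattice), the functions $Z_{u}(\gamma)$ obtained by following $\gamma$ as a flat section of the local system $\underline{\Gamma}$ must match analytic continuation. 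Consistency with the monodromy therefore amounts to two identities, after transport around the loop:
\[
Z(\gamma_{e})\mapsto Z(\gamma_{e}), \qquad Z(\gamma_{m})\mapsto Z(\gamma_{m})+Z(\gamma_{e}).
\]

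The first identity is immediate, since $Z_{u}(\gamma_{e})=u$ is single-valued. For the second, the only multivalued ingredient is $\log(u/\Lambda)$, which continues to $\log(u/\Lambda)+2\pi i$ once the loop is traversed counterclockwise. So the continuation of $Z_{u}(\gamma_{m})$ is
\[
\frac{1}{2\pi i}\bigl(u(\log\tfrac{u}{\Lambda}+2\pi i)-u\bigr)=Z_{u}(\gamma_{m})+u=Z_{u}(\gamma_{m})+Z_{u}(\gamma_{e}),
\]
which is exactly the required monodromy. The linear term $-u$ is irrelevant for the monodromy. It is there so that $dZ(\gamma_{m})=\frac{1}{2\pi i}\log\frac{u}{\Lambda}\,du$, which has the expected logarithmic behaviour near the vanishing cycle.

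For completeness I would also check that these $Z$ are compatible with the structure of Section 3.2. With $a=u$ and $a_{D}=Z(\gamma_{m})$, the transversality condition (\ref{eq:7}) holds trivially in complex dimension one. The prepotential is then $\mathcal{F}$ with $\partial_{a}\mathcal{F}=a_{D}$, and
\[
\operatorname{Im}\partial_{a}^{2}\mathcal{F}=\operatorname{Im}\frac{1}{2\pi i}\log\frac{u}{\Lambda}=-\frac{1}{2\pi}\log\frac{|u|}{\Lambda}>0
\]
for $|u|<\Lambda$. This gives the positivity (\ref{eq:8}) and explains the role of the cutoff $\Lambda$.

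The main obstacle is not the computation but the conventions: the orientation of the loop, the sign of the pairing $\langle\gamma_{e},\gamma_{m}\rangle=1$, and the choice between $\frac{1}{2\pi i}$ and $-\frac{1}{2\pi i}$ together determine whether one obtains $+\gamma_{e}$ or $-\gamma_{e}$. I would fix the counterclockwise orientation, so that $\log$ gains $+2\pi i$, and note that the opposite orientation yields the inverse monodromy. The statement should then be read as holding up to this choice of orientation.
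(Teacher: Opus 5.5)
Your proposal is correct and is the same argument as the paper's. Continuing $u\mapsto e^{2\pi i}u$ shifts $\log(u/\Lambda)$ by $2\pi i$, so $Z(\gamma_{m})\mapsto Z(\gamma_{m})+u=Z(\gamma_{m})+Z(\gamma_{e})$, while $Z(\gamma_{e})=u$ is single-valued. Your extra checks are correct but go beyond what the paper does: positivity of $\operatorname{Im}\partial_{a}^{2}\mathcal{F}=-\frac{1}{2\pi}\log\frac{|u|}{\Lambda}$ for $|u|<\Lambda$, and the remark that the counterclockwise orientation fixes the sign.
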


\begin{proof}
From the monodromy above, $Z(\gamma_{e})$ stays the same after looping around the origin, while $Z(\gamma_{m})$ is shifted to $Z(\gamma_{m})+Z(\gamma_{e})$. Indeed, as $u\mapsto e^{2\pi i}u$, one sees easily that\[Z(\gamma_{m})=\frac{1}{2\pi i}\left(u\log\frac{u}{\Lambda}-u\right)\mapsto\frac{1}{2\pi i}\left(e^{2\pi i}u\log\frac{e^{2\pi i}u}{\Lambda}-e^{2\pi i}u\right)=\]
\[=\frac{1}{2\pi i}\left(u\left(2\pi i+\log\frac{u}{\Lambda}\right)-u\right)=u+\frac{1}{2\pi i}\left(u\log\frac{u}{\Lambda}-u\right)=Z(\gamma_{e})+Z(\gamma_{m}).\]
\end{proof}

\subsection{(Split) attractor flow and its properties}

\begin{definition}
Take $\mathbf{u}:=\{a^{i}\}$ as the holomorphic coordinates for $\mathcal{B}^{0}$. Given $\gamma\in\underline{\Gamma}$ near $b_{0}\in\mathcal{B}^{0}$, consider $F_{\gamma}(\mathbf{u}):=Re(Z_{\mathbf{u}}(\gamma))$, the \textbf{attractor flow} associated to $(b_{0},\gamma)\in tot(\underline{\Gamma})$ is given as the gradient flow of $F_{\gamma}(\mathbf{u})$, i.e,
\begin{equation}\label{eq:14}
\dot{\mathbf{u}}+\nabla F_{\gamma}(\mathbf{u})=0
\end{equation}

where $\dot{\mathbf{u}}$ denotes the derivative with respect to the ``time" parameter $t$, and the gradient is taken with respect to the K\"ahler metric (11), i.e., its $i$-th component $\nabla_{i}$ is given by $\sum_{j}g^{i\bar{j}}\,\overline{\partial}_{j}F_{\gamma}(\mathbf{u})$.
\end{definition}

\begin{proposition}\label{pro3.5}
Along the attractor flow, $Im(Z({\gamma}))$ stays constant, thus the flow lines are straight in the $\mathbb{Z}$-affine structure. Away from the discriminant locus $\Delta$, the function $F_{\gamma}(\mathbf{u})$ is decreasing along the flow line.
\end{proposition}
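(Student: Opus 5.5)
Write $Z:=Z_{\mathbf{u}}(\gamma)$, which is a holomorphic function of $\mathbf{u}=\{a^{i}\}$ near $b_{0}$; this is because $\gamma$ is locally constant and $Z$ is a section of $\underline{\Gamma}^{\vee}\otimes\mathscr{O}_{\mathcal{B}^{0}}$. Then $F_{\gamma}=\tfrac12(Z+\bar Z)$, so $\bar\partial_{j}F_{\gamma}=\tfrac12\overline{\partial_{j}Z}$. The flow (\ref{eq:14}) therefore reads componentwise
\[\dot a^{i}=-\tfrac12\sum_{j}g^{i\bar j}\,\overline{\partial_{j}Z}.\]
I treat the flow as an ODE for the holomorphic coordinates $a^{i}$ on $\mathcal{B}^{0}$, where $g$ is the K\"ahler metric (\ref{eq:11}), which is positive definite by condition (\ref{eq:8}). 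Both claims will follow from computing $\frac{d}{dt}Z(\mathbf{u}(t))$ by the chain rule.

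\textbf{Step 1.} Since $Z$ is holomorphic, $\frac{d}{dt}Z=\sum_{i}\partial_{i}Z\,\dot a^{i}$. Substituting the flow gives
\[\frac{d}{dt}Z=-\tfrac12\sum_{i,j}g^{i\bar j}\,\partial_{i}Z\,\overline{\partial_{j}Z}.\]
Because $(g^{i\bar j})$ is the inverse of a positive-definite Hermitian matrix, this expression is real and equals $-\tfrac12|dZ|^{2}_{g}\le 0$.

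\textbf{Step 2.} Taking imaginary parts in Step 1 gives $\frac{d}{dt}\,Im\,Z(\gamma)=0$, so $Im\,Z(\gamma)$ is constant along the flow. Taking real parts gives $\frac{d}{dt}F_{\gamma}=-\tfrac12|dZ|^{2}_{g}\le0$, so $F_{\gamma}$ is non-increasing.

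\textbf{Step 3 (straightness).} Write $\gamma=\sum_{i}(p_{i}\alpha^{i}+q^{i}\beta_{i})$ in the symplectic basis, so that $Z(\gamma)=\sum_{i}(p_{i}a^{i}+q^{i}a_{D,i})$ with integer coefficients. By Remark \ref{re3.2}, $Im\,Z(\gamma)$ is then an integral linear combination of the dual $\mathbb{Z}$-affine coordinates $\{Im\,a^{i},Im\,a_{D,j}\}$, so its level set is an integral affine hyperplane. For the flow line to be straight rather than merely contained in this hyperplane, I would also show that the velocity direction is constant in the affine chart. The velocity is the $g$-dual of $d\,Re Z$. My plan here is to use the fact that $\dot{\mathbf{u}}$ is proportional to the metric dual of $\overline{dZ}$, together with the fact that the affine coordinates are real or imaginary parts of the special coordinates. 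In the two-dimensional (one complex dimension) case the level set $Im\,Z=c$ is itself a real line, and that already suffices. In general I expect one has to interpret ``straight'' in the sense used by Kontsevich and Soibelman: the flow is the straight line on which $Im\,Z(\gamma)$ is fixed, with $Z(\gamma)$ moving along the ray $Z\in c\,i+\mathbb{R}_{\le Re Z(0)}$.

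\textbf{Step 4 (strict decrease away from $\Delta$).} It remains to show that $dZ(\gamma)\neq0$ on $\mathcal{B}^{0}$. The components $dZ(\gamma_{i})=\alpha_{i}$ form a nondegenerate frame, because the map (\ref{eq:6}) is an immersion by condition (\ref{eq:8}). Hence for $\gamma\neq0$ the form $dZ(\gamma)=\sum_{i}\gamma^{i}\alpha_{i}$ is nowhere zero, and $F_{\gamma}$ is strictly decreasing on $\mathcal{B}^{0}$. At $\Delta$ the metric and the coordinates degenerate, so the argument breaks down there; this is consistent with the statement's restriction to $\mathcal{B}^{0}$.

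The main obstacle is Step 3: making precise the sense of ``straight'' in higher dimension. Steps 1, 2 and 4 are routine.
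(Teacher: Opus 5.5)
The paper states this proposition without proof; the word ``thus'' is its only justification. So your argument has to stand on its own. Steps 1--2 are correct and are the natural computation.

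\textbf{The gap is in Step 3.} Constancy of $Im\,Z(\gamma)$ only confines the flow to the real hypersurface $\{Im\,Z(\gamma)=c\}$, which has dimension $2n-1$. For $n\ge 2$ this says nothing about straightness. Reinterpreting ``straight'' as ``lies in this level set'' weakens the statement rather than proving it; the paper's ``thus'' has the same defect.

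The missing idea is to differentiate \emph{every} affine coordinate $Im\,Z(\gamma')$, $\gamma'\in\underline{\Gamma}$, not just $Im\,Z(\gamma)$. Your Step 1 carries over verbatim and gives
\[\frac{d}{dt}Z(\gamma')=-\tfrac12\sum_{i,j}g^{i\bar j}\,\partial_iZ(\gamma')\,\overline{\partial_jZ(\gamma)}.\]
Now use the special K\"ahler structure. Write $\tau_{kl}=\partial^2\mathcal{F}/\partial a^k\partial a^l=R+iI$. Then (\ref{eq:11}) gives $g_{i\bar j}=I_{ij}$ up to normalization, and in your notation $\partial Z(\gamma)=p+q\tau$. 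Expanding,
\[(p'+q'\tau)\,I^{-1}\,(p+q\bar\tau)^{T}=(p'+q'R)\,I^{-1}(p+qR)^{T}+q'Iq^{T}+i\,(q'\cdot p-p'\cdot q).\]
The imaginary part is the symplectic pairing of $\gamma'$ and $\gamma$, an integer independent of the point. Hence every $Im\,Z(\gamma')$ changes at a constant rate.

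In the chart $(Im\,a^k,Im\,a_{D,k})$ of Remark \ref{re3.2}, the flow is therefore $t\mapsto(Im\,a(0),Im\,a_D(0))+\tfrac{t}{2}(q,-p)$, up to rescaling $t$. This is a genuinely straight, affinely parametrized line, in the direction symplectically dual to $\gamma$. Taking $\gamma'=\gamma$ recovers $\frac{d}{dt}Im\,Z(\gamma)=0$.

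The real parts instead involve the point-dependent positive form $(p'+q'R)I^{-1}(p+qR)^T+q'Iq^T$. That is why the relevant structure is the $Im\,Z$ one, which matches the paper's later use of $\mathcal{B}^0_\theta$.

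Conceptually, for holomorphic $Z(\gamma)$ the $g$-gradient of $Re\,Z(\gamma)$ is, up to sign, the Hamiltonian vector field of $Im\,Z(\gamma)$ for the K\"ahler form. The K\"ahler form is a constant multiple of $\sum_k d\,Im\,a^k\wedge d\,Im\,a_{D,k}$, so it has constant coefficients in that chart, while $Im\,Z(\gamma)$ is linear there. A Hamiltonian field of a linear function for a constant symplectic form is constant.

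\textbf{A smaller correction in Step 4.} The immersion property only says that the $2n$ forms $\alpha_i$ span the $n$-dimensional space $T^{*1,0}_b$. So they are not a frame, and they satisfy an $n$-dimensional complex space of linear relations. What you need is that no nonzero \emph{real} combination $dZ(\gamma)$ vanishes. This follows from $Im\,\tau>0$: if $p+q\tau=0$ with $p,q$ real, then $qI=0$, so $q=0$ and hence $p=0$. Equivalently, the forms $d\,Im\,Z(\gamma_i)$ make up a real coframe. With that fixed, your strict-decrease argument on $\mathcal{B}^0$ is fine.
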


As we noted in remark \ref{re3.2}, $\mathcal{B}^{0}$ is endowed with a $S^{1}$-family of $\mathbb{Z}$-affine structures. Given $\theta\in S^{1}$, the corresponding $\mathbb{Z}$-affine manifold is denoted by $\mathcal{B}^{0}_{\theta}$, and the adapted affine coordinates are given by $Im\left(e^{-i\theta}Z(\gamma)\right)$. Then we see that in the affine structure corresponding to $\theta=Arg Z_{b_{0}}(\gamma)$, the flow line is an affine line on $\mathcal{B}^{0}_{\theta}$ given by 
\begin{equation}\label{eq:15}
Im\left(e^{-i\theta}Z_{\mathbf{u}(t)}(\gamma)\right)=0
\end{equation}

\begin{remark}
We can consider the attractor flow associated to $F_{\gamma}^{\theta}=Re\left(e^{-i\theta}Z_{b}(\gamma)\right)$, where $\theta=Arg(Z_{b_{0}}(\gamma))$, for $(b_{0},\gamma)\in tot\,\underline{\Gamma}$. In this ``rotated case", the attractor line is seen to be given by equation (\ref{eq:15}).
\end{remark}

\begin{definition}\label{def:3.4}
From proposition \ref{pro3.5}, the attractor flow would converge to the local minimum of $F_{\gamma}(\mathbf{u})$, these terminate points of the flow will be called the \textbf{attractor points}.
\end{definition}

Along the flow line, $Re\left(e^{-i\theta}Z_{b}(\gamma)\right)=|Z_{b}(\gamma)|$, thus, the possible minimum of $F_{\gamma}^{\theta}$ are the zeros of $Z(\gamma)$. Since certain homology cycle $\gamma\in\underline{\Gamma}_{b}$ shrinks to zero over $\Delta$, we expect that the attractor points to be located at $\Delta$. By $A^{1}$-singularity assumption, we take $\Delta=\{z_{1}=0\}$. Reducing to two real dimension, denote by $\gamma_{0}$ the vanishing cycle, and by $\gamma_{1}$ the remaining basis element, we see from (\ref{eq:13}) that near the $\Delta$, the central charge take the following form \[Z(\gamma_{0})=z_{1}\,\,\,\,\,\,Z(\gamma_{1})=\frac{1}{2\pi i}\left(z_{1}\log z_{1}+\frac{z_{1}}{2}\right)\]

\begin{proposition}
The attractor flow corresponding to vanishing cycle $\gamma_{0}$ terminates at the singular point where the cycle $\gamma_{0}$ vanishes. For charges other than the vanishing cycles, the attractor flow line avoids the singular point.  
\end{proposition}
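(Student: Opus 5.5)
We reduce to the local two–real–dimensional model given just before the statement: one complex coordinate $z=z_{1}$ near $\Delta=\{z=0\}$, with
\[
Z(\gamma_{0})=z,\qquad Z(\gamma_{1})=\frac{1}{2\pi i}\left(z\log z+\frac{z}{2}\right).
\]
A general charge is $\gamma=p\gamma_{0}+q\gamma_{1}$ with $(p,q)\in\mathbb{Z}^{2}$. The plan is to use Proposition \ref{pro3.5} and Definition \ref{def:3.4}. Along the (rotated) attractor flow, $Im(e^{-i\theta}Z(\gamma))$ stays constant with $\theta=Arg Z_{b_{0}}(\gamma)$. Hence $Re(e^{-i\theta}Z(\gamma))=|Z(\gamma)|$ and this quantity decreases. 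The flow can therefore only terminate at a local minimum of $|Z(\gamma)|$, or at $\Delta$ where the metric degenerates. So the proof comes down to locating the zeros and critical points of $|Z_{z}(\gamma)|$ near $z=0$.

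\emph{Case $q=0$ (the vanishing cycle).} Here $Z(\gamma_{0})=z$ is itself a special coordinate. The constraint $Im(e^{-i\theta}z)=0$ says the flow line is the straight ray $\arg z=\theta$, which is a genuine affine line in $\mathcal{B}^{0}_{\theta}$. Along it $|z|$ is strictly decreasing. There are no critical points away from $0$, because $dZ(\gamma_{0})=dz\neq0$ and the flow is a nondegenerate gradient flow for the metric (\ref{eq:11}). So $|z(t)|$ decreases monotonically to its infimum, which is $0$. One must also check that the flow reaches $z=0$, in finite time or at least as a limit. To do this, compute the metric in the model: with $a=z$ and $a_{D}=Z(\gamma_{1})$, the factor $g_{z\bar z}=Im\,\partial_{z}a_{D}\sim\frac{1}{2\pi}\log\frac{1}{|z|}$ is positive near $0$. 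The equation $\dot z=-g^{z\bar z}\partial_{\bar z}Re(e^{-i\theta}z)$ then gives $\frac{d|z|}{dt}\sim -\frac{c}{\log(1/|z|)}$, which is integrable. Hence the flow hits $z=0$, the point where $\gamma_{0}$ vanishes.

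\emph{Case $q\neq0$.} Write
\[
Z(\gamma)=pz+\frac{q}{2\pi i}\left(z\log z+\frac z2\right)=z\left(p+\frac{q}{2\pi i}\left(\log z+\tfrac12\right)\right).
\]
As $z\to0$, the factor $\log z$ dominates, so $|Z(\gamma)|\sim\frac{|q|}{2\pi}|z|\,\bigl|\log|z|\bigr|\to0$. Thus $z=0$ is still a zero of $|Z(\gamma)|$, and we must show the flow line cannot actually reach it. The idea is to use the conserved quantity. Multivaluedness of $\log$ is handled by working on the universal cover of the punctured disc, where the monodromy $\gamma\mapsto\gamma+q\gamma_{0}$ is visible. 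On the cover the line $Im(e^{-i\theta}Z(\gamma))=0$ is a curve. Writing $z=re^{i\phi}$, we get
\[
\arg Z(\gamma)=\phi+\arg\left(p+\tfrac{q}{2\pi i}\bigl(\log r+i\phi+\tfrac12\bigr)\right).
\]
As $r\to0$ with $\phi$ bounded, the second term tends to $\arg\bigl(\tfrac{q}{2\pi i}\log r\bigr)=\pm\frac{\pi}{2}$. Staying on the curve $\arg Z=\theta$ therefore pins $\phi$ near $\theta\mp\frac{\pi}{2}$ in the limit. This does not by itself prevent approach, so a finer argument is needed.

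The finer argument is the main obstacle. The preferred approach is to show that $\nabla F^{\theta}_{\gamma}$ does not point into $0$. Concretely, compute $\partial_{z}Z(\gamma)=p+\frac{q}{2\pi i}(\log z+\frac32)$ and note that it diverges like $\frac{q}{2\pi i}\log z$. The flow equation is $\dot z\propto-\overline{\partial_{z}(e^{-i\theta}Z)}/g_{z\bar z}$ with $g_{z\bar z}\sim\frac{1}{2\pi}|\log|z||$. This makes $|\dot z|$ bounded, with direction $\approx\arg\bigl(\overline{e^{-i\theta}\tfrac{q}{2\pi i}\log z}\bigr)$, which points tangentially rather than radially near $0$. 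If that estimate is correct, the radial component of $\dot z$ is $o(1)$ relative to the angular one, so the trajectory spirals or turns away. If instead the estimate shows the flow does approach $0$, the fallback is to argue physically, as in the split attractor picture: the flow for $\gamma$ with $q\neq0$ reaches a wall of marginal stability, where it splits, before it can get to $\Delta$. The flow line then avoids the singular point. Whichever route works, the remaining step is to check that the minimum of $|Z(\gamma)|$ along the admissible curve is attained at a regular point of $\mathcal{B}^{0}$ or at a wall, not at $z=0$.
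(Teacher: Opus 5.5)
Your treatment of the vanishing cycle is the paper's argument (the paper notes that $Z(\gamma_0)=z_1\to0$ is the minimum and invokes Proposition~\ref{pro3.5} and Definition~\ref{def:3.4}), and your finite-time check is a welcome addition. The gap is the case $q\neq0$, and it is not just an unproved estimate: the estimate you propose is false. You compute the limiting direction of $\dot z$ but never combine it with the constraint $\phi\approx\theta\mp\frac{\pi}{2}$ you derived one step earlier.

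Take $q>0$; the case $q<0$ follows from $\gamma\mapsto-\gamma$, which leaves $h=e^{-i\theta}Z(\gamma)$ unchanged. Write $z=re^{i\phi}$, $L=\log\frac1r$ and $\tilde\phi=\phi+2\pi p/q$. Then $g_{z\bar z}=\mathrm{Im}\,\partial_zZ(\gamma_1)=\frac{1}{2\pi}(L-\frac32)$, which is positive exactly on $r<e^{-3/2}$, and
\[
\dot z=-\frac{\overline{h'(z)}}{g_{z\bar z}}=q\,e^{i\theta}\Bigl(i-\frac{\tilde\phi}{L-\frac32}\Bigr),\qquad Z(\gamma)=\frac{q}{2\pi}\,z\Bigl(\tilde\phi+i\bigl(L-\tfrac12\bigr)\Bigr).
\]
On the level curve $\arg Z(\gamma)=\theta$ this forces $\phi=\theta-\frac{\pi}{2}+\alpha$ with $\alpha=\arctan\frac{\tilde\phi}{L-1/2}$. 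Then
\[
\dot r=\mathrm{Re}\bigl(\dot z\,e^{-i\phi}\bigr)=-q\Bigl(\cos\alpha+\frac{\tilde\phi}{L-\frac32}\,\sin\alpha\Bigr)\le-q\cos\alpha<0 .
\]
So the motion is radially \emph{inward}, with $\dot r\to-q$. The flow of every charge with $q\neq0$ therefore reaches $z=0$ in finite time from every starting point with $0<|z_0|<e^{-3/2}$. Your claim that the radial component is $o(1)$ relative to the angular one has it exactly backwards. This is consistent with the picture you set up: $|Z(\gamma)|\approx\frac{|q|}{2\pi}rL$ still vanishes at $0$, and $h'$ vanishes only on $r=e^{-3/2}$, so nothing stops the decrease of $|Z(\gamma)|$ before $\Delta$.

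Your fallback does not rescue this either. In the model, $Z(\gamma')/Z(\gamma'')$ is an integral M\"obius function of $\tau_0=Z(\gamma_1)/Z(\gamma_0)=\frac{1}{2\pi}\bigl(\phi+i(L-\frac12)\bigr)$. For non-parallel $\gamma',\gamma''$ it is real only where $\mathrm{Im}\,\tau_0=0$, i.e.\ on the circle $r=e^{-1/2}$. So no wall of the first kind meets the region $r<e^{-3/2}$ where the metric is positive, and there is nowhere for the flow to split.

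For the local model as written, the second assertion is therefore false, and no argument confined to this model can prove it. The paper's own proof has exactly the hole you sensed. It asserts that a flow into $\Delta$ with constant phase of $Z(\gamma_1)$ is ``impossible due to the form of $Z(\gamma_1)$''. Yet the level curve $\arg Z(\gamma_1)=\theta$ does run into $0$, along the direction $\phi\to\theta-\frac{\pi}{2}$.

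To get avoidance you need input from outside the model, for instance a nonzero limit $Z(\gamma_1)\to Z_0$ at $\Delta$. In that case the line $\mathrm{Im}(e^{-i\theta}Z(\gamma))=0$ stays away from $0$ whenever $\theta\not\equiv\arg(qZ_0)\pmod{\pi}$. That yields the statement only for generic base points $b_0$.
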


\begin{proof}
From the expression of $Z$ above, we see that when we approach to the origin, $Z(\gamma_{0})=z_{1}$ goes to zero, which is the minimum of $Re\left(Z(\gamma_{0})\right)$, thus by proposition \ref{pro3.5} and definition \ref{def:3.4}, the flow line terminates at the attractor point $\Delta$. On the other hand, if the attractor flow associated to $Re\left(Z(\gamma_{1})\right)$ terminates at $\Delta$, then by proposition \ref{pro3.5}, the phase of the central charge would be constant along the flow, but this is impossible due to the form of $Z(\gamma_{1})$ given above. 
\end{proof}

The \textbf{wall of the first kind} $\mathcal{W}_{\gamma}^{1}$ associated to the charge $\gamma$ is given by $\mathcal{W}^{1}_{\gamma}=\bigcup_{\gamma=\gamma_{1}+\gamma_{2}}\mathcal{W}_{\gamma_{1},\gamma_{2}}$, where  \[\mathcal{W}_{\gamma_{1},\gamma_{2}}:=\left\{b\in\mathcal{B}: Im\left(\frac{Z_{b}(\gamma_{1})}{Z_{b}(\gamma_{2}}\right)=0\right\}\]

If the flow line $\mathcal{L}_{\gamma}$ associated to $\gamma$ hits the wall $\mathcal{W}_{\gamma_{1},\gamma_{2}}$ where $Arg (Z(\gamma))=Arg (Z(\gamma_{1}))=Arg (Z(\gamma_{2}))$, then $\mathcal{L}_{\gamma}$ would split into two flow lines $\mathcal{L}_{\gamma_{1}}$ and $\mathcal{L}_{\gamma_{2}}$, associated to $\gamma_{1}$ and $\gamma_{2}$ respectively. 

\begin{figure}[h]
    \centering
    \includegraphics[height=2.9cm, width=6cm]{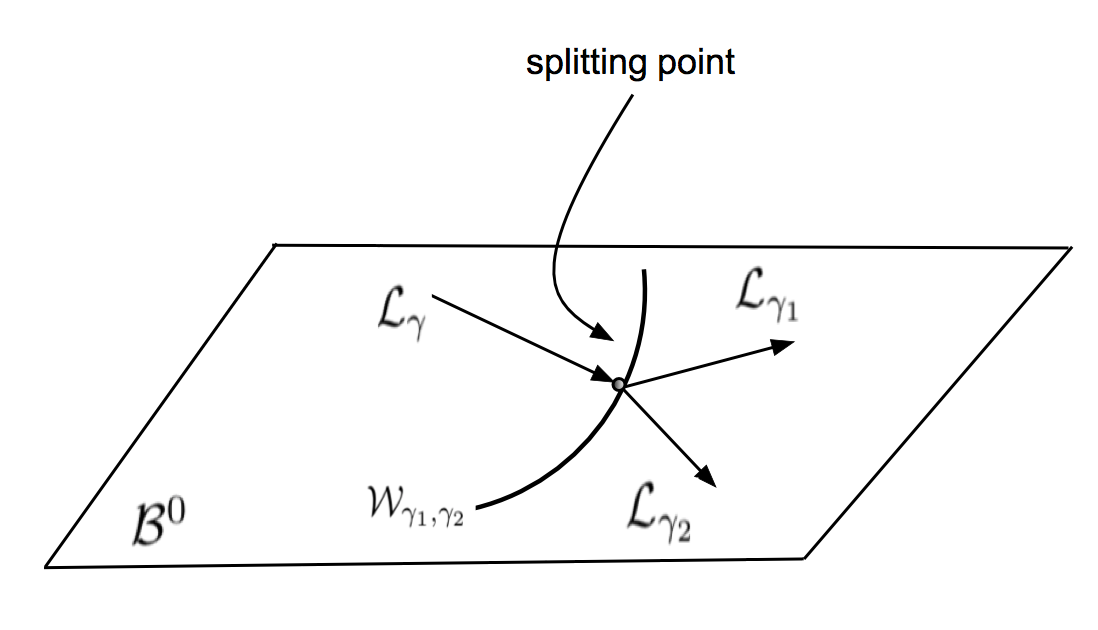}
    \caption{Split attractor flow and splitting point}
    \label{fig:my_label}
\end{figure}

This process could be iterated until the resulting flows terminate at the attractor points. This would give a tree on the base, which we is called the \textbf{(split) attractor tree}.

\subsection{Connection to wall-crossing structure ("WCS" for short)} 

We give a coarse review of the WCS, see \cite{kontsevich2014wall}\cite{wang2020wall} for more about this. It is a formalism used to encode the Donaldson-Thomas invariants (DT invariants) (\cite{kontsevich2008stability}), corresponding to the enumeration of BPS states (c.f., \cite{denef2011split}\cite{wotschke2013bps}\cite{moore2012lecture}) in physics. 

Let $\underline{\Gamma}$ be a local system of charge lattices, and $\underline{\mathfrak{g}}=\bigoplus_{\gamma\in\underline{\Gamma}}\mathfrak{g}_{\gamma}$ a local system of $\underline{\Gamma}$-graded Lie algebras over $\mathcal{B}^{0}$. Given a locally continuous map $Y:\mathcal{B}^{0}\to\Gamma_{\mathbb{R}}^{*}$, then a WCS gives a map $a: tot\,\underline{\Gamma}\to\underline{\mathfrak{g}}$ s.t., $a_{b}(\gamma):=a(b,\gamma)\in\underline{\mathfrak{g}}_{b,\gamma}$ is non-trivial only if $Y(b)(\gamma)=0$ (this defines \textit{wall of the second kind}). 

The map $a$ is a locally constant, and is discontinuous at those $(b,\gamma)$ s.t., $\gamma=\gamma_{1}+\gamma_{2}$ and $Arg (Z_{b}(\gamma_{1}))=Arg(Z_{b}(\gamma_{2}))$. This discontinuous variety projects to $\mathcal{W}^{1}_{\gamma}$ on $\mathcal{B}^{0}$. When the wall $\mathcal{W}^{1}_{\gamma}$ is crossed, the "jumps" of $a_{b}(\gamma)$ are governed by the \textit{Kontsevich-Soibelman wall-crossing formula} (KSWCF) (see \cite{kontsevich2008stability} for more details).

\textbf{The use of split attractor flows}

In \cite{kontsevich2014wall}, it is proposed that we can use the attractor flows to produce an algorithm for computing $a_{b}(\gamma)\in\underline{\mathfrak{g}}_{b,\gamma}$ that satisfy KSWCF. To this end, consider all attractor trees rooted at $(b,\gamma)$ with tail edges hitting the discriminant locus $\Delta$. We expect the number of such trees to be finite. We start from the tail vertices, and move toward to the root $b$, and at each internal vertex $b_{*}$ lying on the first kind wall, we apply KSWCF. By assigning the \textbf{``initial data"} of WCS, i.e., the DT invariants at $\Delta$, we can compute $a_{b}(\gamma)$ inductively by this procedure.

\begin{remark}
From the $A^{1}$-singularity assumption in section \ref{sec:3.3}, we expect that for $b\in\Delta$, there  is only one special direction corresponding the the vanishing cycle over $b$. Consequently, the DT invariants at $b$ is non-trivial only in this special direction, which means that the local system $\underline{\mathfrak{g}}_{loc}$ is typically trivial of rank one. The initial data is given by assigning for $b\in\Delta$ and the primitive vanishing cycle $\gamma$ the group element \cite{kontsevich2014wall}: \[a_{b}(k\gamma):=\frac{1}{k^{2}}e_{k\gamma}\,\,\,\,\,\text{for all}\,\, k\geq 1.\]

This is motivated by the study of multi-cover formula for the Ooguri-Vafa space in \cite{lin2017open}.
\end{remark}

\begin{remark}
In SYZ mirror symmetry program, the WCS is used to glue the local models of the SYZ near the discriminant locus. Roughly speaking, the attractor flows (i.e., tropical curves) and their combinatorial structures on the base $\mathcal{B}$ govern how the holomorphic disks bounded by torus fibers are being glued. The rules of gluing are then determined by certain transformations that enter into the KSWCF associated to the walls. This idea was originally proposed and investigated by Kontsevich and Soibelman in \cite{kontsevich2006affine} in studying the mirror symmetry for $K3$ surfaces.
\end{remark}

\begin{figure}[h]
    \centering
    \includegraphics[height=3.4cm, width=5.6cm]{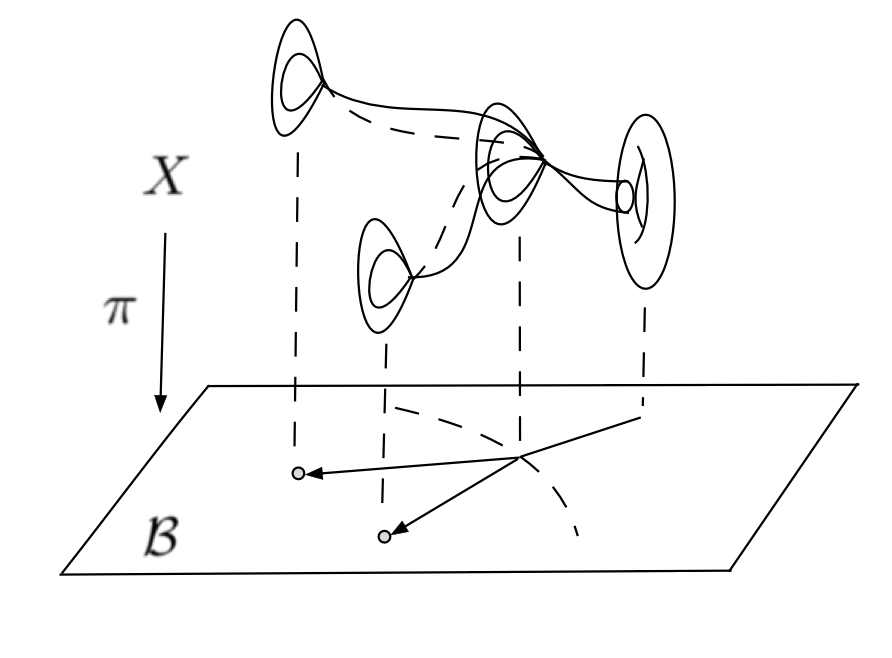}
    \caption{Holomorphic disks and attractor flows}
\end{figure}

\section{Attractor flow versus Hesse flow}\label{sec:4}

\subsection{Legendre transform and the dual Monge-Amp\`er manifold}

We pointed out in section \ref{sec:3} that the base $\mathcal{B}$ of a complex integrable system is naturally endowed with a $\mathbb{Z}$-affine structure with singularities. On the smooth part $\mathcal{B}^{0}$, it is endowed with a K\"ahler metric $g_{\mathcal{B}^{0}}$ (see (\ref{eq:11}))
with K\"ahler potential given by $K=Im\left(\sum_{i}a_{D,i}\,\bar{a}^{i}\right)$. Next, recall (definition \ref{def:2.2}) that a \textit{Monge-Amp\`er manifold} (c.f.,\cite{kontsevich2001homological}) is a triple $(X,g,\nabla)$, where $(X,g)$ is a smooth Riemanian manifold with metric $g$, and $\nabla$ a flat connection on the tangent bundle $TX$ such that $\nabla$ defines an affine structure on $X$, and the metric $g$ in local affine coordinates $(x_{1},\cdots,x_{n})$ can be expressed as \[g_{ij}=\frac{\partial^{2}K}{\partial x_{i}\partial x_{j}}\] for smooth real-valued function $K$, called \textit{Hessian potential}. And $g$ satisfies the real Monge-Amp\`ere equation.

\begin{definition}
Let $U\subset\mathbb{R}^{n}$ be a convex open domain in $\mathbb{R}^{n}$ equipped with the standard affine coordinates $x_{1},\cdots,x_{n}$, and $K:U\to\mathbb{R}$ a convex functions. Then the Legendre transform of the function $K$ is defined as
\begin{equation}
    \widehat{K}(y_{1},\cdots,y_{n}):=\max_{x\in U}\left(\sum_{i}x_{i}y_{i}-K(x_{1},\cdots,x_{n})\right)
\end{equation}
\end{definition}

We have the following proposition for Monge-Amp\`ere manifold (see c.f. lemma 1 in \cite{kontsevich2001homological} for a short elegant proof).

\begin{proposition}
If $K:U\to\mathbb{R}$ is a convex functions satisfying the Monge-Amp\`ere equation,then its Legendre transform $\widehat{K}$ also satisfies the Monge-Amp\`ere equation.
\end{proposition}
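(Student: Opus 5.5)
The plan is to use the standard fact that the Legendre transform inverts the gradient map and turns the Hessian into its inverse. Since $K$ is convex (I assume strictly convex, i.e. $\left(\partial^{2}K/\partial x_{i}\partial x_{j}\right)$ is positive definite, which is needed for it to define a metric), the gradient map
\[\phi:U\to\mathbb{R}^{n},\qquad x\mapsto y=\nabla K(x),\quad y_{i}=\frac{\partial K}{\partial x_{i}},\]
is a local diffeomorphism by the inverse function theorem. Strict convexity of $\sum_{i}(x_{i}-x'_{i})(y_{i}-y'_{i})$ along segments in the convex domain $U$ makes $\phi$ injective, so it is a diffeomorphism onto its open image $V=\phi(U)$. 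For $y\in V$, the maximum defining $\widehat{K}(y)$ is attained at the unique critical point $x=\phi^{-1}(y)$, because the function $x\mapsto\sum_{i}x_{i}y_{i}-K(x)$ is strictly concave. Hence on $V$
\[\widehat{K}(y)=\sum_{i}x_{i}(y)\,y_{i}-K(x(y)),\]
and $\widehat{K}$ is smooth there.

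The key step is to differentiate this expression. By the chain rule the terms involving $\partial x_{j}/\partial y_{i}$ cancel, since $\partial K/\partial x_{j}=y_{j}$. This leaves
\[\frac{\partial\widehat{K}}{\partial y_{i}}=x_{i}(y),\]
so $\nabla\widehat{K}=\phi^{-1}$. Differentiating once more gives
\[\frac{\partial^{2}\widehat{K}}{\partial y_{i}\partial y_{j}}=\frac{\partial x_{i}}{\partial y_{j}}=\left[\left(\frac{\partial^{2}K}{\partial x\,\partial x}\right)^{-1}\right]_{ij}\Big|_{x=x(y)}.\]
Thus the Hessian of $\widehat{K}$ is the inverse matrix of the Hessian of $K$, evaluated at corresponding points. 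In particular it is positive definite, so $\widehat{K}$ is again convex.

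Taking determinants, $\det\left(\partial^{2}\widehat{K}/\partial y_{i}\partial y_{j}\right)=1/\det\left(\partial^{2}K/\partial x_{i}\partial x_{j}\right)=1/c$, which is again constant. This is exactly the real Monge--Amp\`ere equation for $\widehat{K}$ in the affine coordinates $y_{i}$.

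I expect the only real obstacle to be the well-definedness and smoothness of $\widehat{K}$. Specifically, one needs the maximum to be attained in the interior of $U$, which restricts the statement to the domain $V=\nabla K(U)$, and one needs strict convexity to rule out a degenerate Hessian. I will state these as standing assumptions; the rest is the routine chain-rule computation above. Finally, I will remark that the $y_{i}$ are the natural affine coordinates on the dual side. In the setting of the paper these are the coordinates $y_{j}=Re(a_{D,j})$ conjugate to $x^{i}=Re(a^{i})$. This is what lets the result be read as a duality of Monge--Amp\`ere structures.
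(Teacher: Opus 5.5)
Your proof is correct and is essentially the argument the paper relies on; the paper gives no proof of its own and cites Lemma 1 of Kontsevich--Soibelman. The key step is the same: the Legendre transform inverts the gradient map, so the Hessian of $\widehat{K}$ at $y=\nabla K(x)$ is the inverse of the Hessian of $K$ at $x$, giving $\det(\partial^{2}\widehat{K}/\partial y_{i}\partial y_{j})=1/c$. Kontsevich--Soibelman phrase this, as I recall their proof, as the condition that $dx_{1}\wedge\cdots\wedge dx_{n}$ and $dy_{1}\wedge\cdots\wedge dy_{n}$ have constant ratio on the common Lagrangian graph of $dK$ and $d\widehat{K}$.

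Two small points. Your strict-convexity assumption is automatic: a positive semidefinite Hessian with constant determinant $c>0$ is positive definite, and $c=0$ must be excluded anyway. Also, injectivity of $\nabla K$ comes from strict monotonicity, $\sum_{i}(x_{i}-x'_{i})(y_{i}-y'_{i})>0$ for $x\neq x'$, or simply from uniqueness of the maximizer of a strictly concave function. It does not come from ``strict convexity'' of that expression, as you wrote.
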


From the above proposition, it follows that in the dual affine coordinates $(y_{1},\cdots,y_{n})$, we have another metric $\widehat{g}_{ij}$ (dual to $g_{ij}$) given by \[\widehat{g}_{ij}=\frac{\partial^{2}\widehat{K}}{\partial y_{i}\partial y_{j}}\] We state the following proposition given in \cite{kontsevich2001homological}.

\begin{proposition}
For a given Monge-Amp\`ere manifold $(X,g,\nabla)$ there is a canonically defined dual Monge-Amp\`ere manifold $(\widehat{X},\widehat{g},\widehat{\nabla})$ such that $(X,g)$ is identified with $(\widehat{X},\widehat{g})$ as Riemann manifolds, and the local system $(T_{\widehat{X}},\widehat{\nabla})$ is naturally isomorphic to the local system dual to $(T_{X},\nabla)$. And if $\nabla$ defines an $\mathbb{Z}$-affine structure on $X$, then $\widehat{\nabla}$ defines an $\mathbb{Z}$-affine structure on $\widehat{X}$.
\end{proposition}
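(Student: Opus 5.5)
The plan is to build the dual structure locally from the Legendre transform and then check that the local constructions glue. Start with an affine chart $(x_1,\dots,x_n)$ of $(X,\nabla)$ on a convex open set $U$, where $g_{ij}=\partial^2K/\partial x_i\partial x_j$. Define the gradient map
\[
y_i:=\frac{\partial K}{\partial x_i}.
\]
Its Jacobian is $(g_{ij})$, which is positive definite, so $x\mapsto y$ is a local diffeomorphism (injective on $U$ by strict convexity). We therefore take $(y_1,\dots,y_n)$ as new coordinates on the same underlying manifold. The connection $\widehat{\nabla}$ is the flat torsion-free connection for which the $dy_i$ are parallel; such a connection exists by the construction preceding definition \ref{def:2.4} and proposition \ref{pro:2.1}. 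Set $\widehat{X}=X$ as a manifold, and take as Hessian potential the Legendre transform
\[
\widehat{K}(y)=\sum_i x_iy_i-K(x).
\]

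Next I would verify the metric identification. First, $\partial\widehat{K}/\partial y_i=x_i$, because the terms involving $\partial x/\partial y$ cancel against $dK=\sum_i y_i\,dx_i$. Hence $\widehat{g}_{ij}=\partial^2\widehat K/\partial y_i\partial y_j=\partial x_i/\partial y_j$, which is the matrix $(g^{-1})_{ij}$. Then
\[
\sum_{ij}\widehat g_{ij}\,dy_idy_j=\sum_{ij}(g^{-1})_{ij}\,g_{ik}g_{jl}\,dx_kdx_l=\sum_{kl} g_{kl}\,dx_kdx_l,
\]
so $(X,g)=(\widehat X,\widehat g)$ as Riemannian manifolds. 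For the Monge--Amp\`ere equation, note that $\det\widehat g=(\det g)^{-1}$ is again constant, which is exactly the preceding proposition.

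The remaining step is to show the construction is canonical and globalizes. This is the main obstacle. On an overlap, two affine charts differ by $x'=Ax+b$, and the two potentials differ by an affine function, $K'=K+\langle c,x\rangle+d$. A direct computation gives $y'=A^{-T}(y+c)$. So the transition maps of the $y$-charts lie in $GL(n)\ltimes\mathbb{R}^n$ with linear part $A^{-T}$, and $\widehat\nabla$ is well defined globally. Since $dy=\nabla dK$, a constant-coefficient linear form lies in $T^*X$ while $\partial/\partial y_i$ corresponds to $dx_i$ under the metric. This identifies $(T_{\widehat X},\widehat\nabla)$ with the dual local system $(T^*_X,\nabla^*)$, with monodromy contragredient to the original. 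If the transitions $A$ lie in $GL(n,\mathbb{Z})$, then so do the $A^{-T}$, which gives the integrality statement.

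One subtlety remains. The translation parts $A^{-T}c$ must also be compatible with a $\mathbb{Z}$-affine structure; for this the statement only needs linear parts in $GL(n,\mathbb{Z})$ under the definition of $Aff_{\mathbb{Z}}$ used in the paper. I would also check that the Legendre transforms $\widehat K$ on different charts differ by affine functions in $y$, namely $\widehat K'=\widehat K-d+\text{linear}$, so that $\widehat K$ is a multivalued convex function in the same sense as $K$.
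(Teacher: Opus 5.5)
The paper states this proposition without proof and cites Kontsevich--Soibelman; your argument is correct and is essentially their Legendre-transform construction. It uses dual affine coordinates $y_i=\partial K/\partial x_i$, the Legendre-dual potential with inverse Hessian (so $g$ is unchanged and $\det\widehat g=(\det g)^{-1}$), transition maps with linear part $A^{-T}$ (hence in $GL(n,\mathbb{Z})$ whenever $A$ is), and the identification $\partial/\partial y_i\mapsto dx_i$ via $g$, which gives $(T_{\widehat X},\widehat\nabla)\cong(T^*_X,\nabla^*)$. The only slip is cosmetic: the constant in $\widehat K'-\widehat K$ is $\langle A^{-1}b,c\rangle-d$ rather than $-d$, which does not matter since you only need the difference to be affine in $y$.
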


Next, we give explicitly the dual Monge-Amp\`ere manifold for the base of the complex integrable system. The computation in the proposition \ref{pro:4.3} below is based on the paper \cite{svan2012bps} of Dieter Van der Bleeken. 

Recall that on the base $\mathcal{B}^{0}$, we have the (rotated) special coordinates $\{e^{-i\theta}a^{k}\}$ and the dual coordinates  $\{e^{-i\theta}a_{D,l}\}$. Consider the adapted real special affine coordinates, i.e.,
\begin{equation}
   e^{-i\theta}a^{k}=x^{k}+iy^{k}, \qquad e^{-i\theta}a_{D,l}=x_{l}+iy_{l} 
\end{equation}

We want to find the relations between these affine coordinates. As $a_{D,l}=\partial\mathcal{F}/\partial a^{l}$, and the prepotential $\mathcal{F}$ is holomorphic in $a^{k}$, we see that $a_{D,l}$ can be viewed as a function in variables $\{x^{k},y^{k}\}$. We claim that 

\begin{proposition}\label{pro:4.3}
The affine variables $x_{l}$ is conjugate to $y^{l}$, while $y_{l}$ is conjugate to $x^{l}$ in the sense that there exists some holomorphic function $\mathscr{F}$ such that
\begin{equation}\label{eq:18}
          x_{l}=\frac{\partial\mathscr{F}}{\partial y^{l}},\qquad y_{l}=\frac{\partial\mathscr{F}}{\partial x^{l}}
\end{equation}
\end{proposition}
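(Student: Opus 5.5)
The plan is to absorb the rotation into the prepotential and then read off real and imaginary parts using the Cauchy--Riemann equations. Set $w^{k}:=e^{-i\theta}a^{k}=x^{k}+iy^{k}$ and $w_{l}:=e^{-i\theta}a_{D,l}=x_{l}+iy_{l}$. Define the rotated prepotential $G(w):=e^{-2i\theta}\mathcal{F}(e^{i\theta}w)$. It is holomorphic in the $w^{k}$. By the chain rule and $a_{D,l}=\partial\mathcal{F}/\partial a^{l}$,
\[
\frac{\partial G}{\partial w^{l}}=e^{-2i\theta}\,e^{i\theta}\,\frac{\partial\mathcal{F}}{\partial a^{l}}=e^{-i\theta}a_{D,l}=w_{l}.
\]
So the rotated pair $\{w^{k}\},\{w_{l}\}$ is again a conjugate pair in the sense of Definition \ref{def:2.4}, with prepotential $G$. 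This reduces everything to the case $\theta=0$.

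Next, for a holomorphic $G$ in $w^{l}=x^{l}+iy^{l}$, the real partial derivatives are
\[
\frac{\partial G}{\partial x^{l}}=\frac{\partial G}{\partial w^{l}},\qquad \frac{\partial G}{\partial y^{l}}=i\,\frac{\partial G}{\partial w^{l}}.
\]
I will take
\[
\mathscr{F}:=Im\,G=Im\left(e^{-2i\theta}\mathcal{F}(a)\right).
\]
Taking imaginary parts, which commutes with real differentiation, gives
\[
\frac{\partial\mathscr{F}}{\partial x^{l}}=Im\,\frac{\partial G}{\partial w^{l}}=Im\,w_{l}=y_{l},
\]
\[
\frac{\partial\mathscr{F}}{\partial y^{l}}=Im\left(i\,\frac{\partial G}{\partial w^{l}}\right)=Re\,w_{l}=x_{l}.
\]
These are exactly the relations (\ref{eq:18}).

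The main point needing care is the interpretation of ``holomorphic function $\mathscr{F}$''. Relations of the form (\ref{eq:18}) with real left-hand sides force $\mathscr{F}$ to be real-valued, up to an additive constant. So the honest statement is that $\mathscr{F}$ is the imaginary part of the holomorphic function $G$: a real pluriharmonic function, built from the holomorphic prepotential $e^{-2i\theta}\mathcal{F}$. I would state this explicitly. I would also remark that $\mathscr{F}$ is only defined locally, up to an additive constant and the usual ambiguities of $\mathcal{F}$ under monodromy, just as $\mathcal{F}$ is. Apart from this, the proof is a direct verification, and the only computation to check is the chain-rule factor $e^{-2i\theta}e^{i\theta}=e^{-i\theta}$ in the rotation step.
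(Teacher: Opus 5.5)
Your proof is correct and is essentially the paper's argument: both apply the Cauchy--Riemann equations to the holomorphic function $e^{-2i\theta}\mathcal{F}$ in the variables $x^{l}+iy^{l}$ and take $\mathscr{F}$ to be its imaginary part, which, as you rightly note, is a real pluriharmonic function rather than a holomorphic one. Your normalization $\mathscr{F}=Im\left(e^{-2i\theta}\mathcal{F}\right)$ is the right one; the paper's $\mathscr{F}=2\,Im\left(e^{-2i\theta}\mathcal{F}\right)$ comes from dropping the factor $\frac{1}{2}$ in $\partial x^{l}/\partial a^{l}=\frac{1}{2}e^{-i\theta}$, as the check $\mathcal{F}=a^{2}/2$, $\theta=0$ (where the correct potential is $\mathscr{F}=x^{1}y^{1}$) confirms.
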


\begin{proof} We compute as follows:
\[x_{l}+iy_{l}=e^{-i\theta}a_{D,l}=e^{-i\theta}\frac{\partial\mathcal{F}}{\partial a^{l}}=e^{-i\theta}\left(\sum_{k}\frac{\partial\mathcal{F}}{\partial x^{k}}\frac{\partial x^{k}}{\partial a^{l}}+\sum_{k}\frac{\partial\mathcal{F}}{\partial y^{k}}\frac{\partial y^{k}}{\partial a^{l}}\right)=
\]

\[=e^{-i\theta}\left(e^{-i\theta}\frac{\partial\mathcal{F}}{\partial x^{l}}-i\,e^{-i\theta}\frac{\partial\mathcal{F}}{\partial y^{l}}\right)=\frac{\partial}{\partial x^{l}}\left(e^{-2i\theta}\mathcal{F}\right)-i\,\frac{\partial}{\partial y^{l}}\left(e^{-2i\theta}\mathcal{F}\right)=\]

\[=\frac{\partial}{\partial x^{l}}\left(Re\left(e^{-2i\theta}\mathcal{F}\right)+i\,Im\left(e^{-2i\theta}\mathcal{F}\right)\right)-i\,\frac{\partial}{\partial y^{l}}\left(Re\left(e^{-2i\theta}\mathcal{F}\right)+i\,Im\left(e^{-2i\theta}\mathcal{F}\right)\right)\]

which by applying the Cauchy-Riemann equation for the holomorphic function $e^{-2i\theta}\mathcal{F}$, becomes

\[2\left(\frac{\partial Im\left(e^{-2i\theta}\mathcal{F}\right)}{\partial y^{l}}+i\,\frac{\partial Im\left(e^{-2i\theta}\mathcal{F}\right)}{\partial x^{l}}\right)
\]
By defining $\mathscr{F}:=2Im\left(e^{-2i\theta}\mathcal{F}\right)$, the relations (\ref{eq:18}) then follows from the above computation.
\end{proof}

Then the function $\mathscr{F}$, which is given `apriori' as a function in the variables $(x^{i}, y^{i})$, now can be expressed as a function in the variables $(y_{i},y^{i})$ through the Legendre transform with respect to the conjugate variables $(x^{i},y_{i})$ as follows
\begin{equation}
    \widehat{\mathscr{F}}^{y}(y_{i},y^{i})=\sum_{k}y_{k}x^{k}-\mathscr{F}(x^{i},y^{i})
\end{equation}
It is easy to see that
\begin{equation}\label{eq:20}
    \frac{\partial\widehat{\mathscr{F}}^{y}}{\partial y^{l}}=-x_{l},\qquad \frac{\partial\widehat{\mathscr{F}}^{y}}{\partial y_{k}}=x^{k}
\end{equation}

Thus the special coordinates $\{a^{k}\}$ and its dual $\{a_{D,l}\}$ can be expressed in terms of the affine coordinates $(y^{i},y_{i})$ as
\begin{align}
    a^{k}=a^{k}(y_{i},y^{i})=e^{i\theta}(x^{k}+iy^{k})=e^{i\theta}\left(\frac{\partial\mathscr{\widehat{F}}^{y}}{\partial y_{k}}+iy^{k
    }\right)\\
    a_{D,l}=a_{D,l}(y_{i},y^{i})=e^{i\theta}(x_{l}+iy_{l})=e^{i\theta}\left(-\frac{\partial\widehat{\mathscr{F}}^{y}}{\partial y^{l}}+iy_{l}\right)
\end{align}

Similarly, we can transform $\mathscr{F}$ into a function in variables $(x_{i},x^{i})$ as 
\begin{equation}
    \widehat{\mathscr{F}}^{x}(x_{i},x^{i})=\sum_{k}x_{k}y^{k}-\mathscr{F}(x^{i},y^{i})
\end{equation}
Consequently
\begin{equation}\label{eq:24}
    \frac{\partial\widehat{\mathscr{F}}^{x}}{\partial x^{l}}=-y_{l},\qquad\frac{\partial\widehat{\mathscr{F}}^{x}}{\partial x_{k}}=y^{k}
\end{equation}

Thus, in terms of affine coordinates $(x_{i},x^{i})$, we have the following expressions
\begin{align}
    a^{k}=a^{k}(x_{i},x^{i})=e^{i\theta}(x^{k}+iy^{k})=e^{i\theta}\left(x^{k}+i\frac{\partial\mathscr{\widehat{F}}^{x}}{\partial x_{k}}\right)\\
    a_{D,l}=a_{D,l}(x_{i},x^{i})=e^{i\theta}(x_{l}+iy_{l})=e^{i\theta}\left(x_{l}-i\frac{\partial\widehat{\mathscr{F}}^{x}}{\partial x^{l}}\right)
\end{align}

The function $\widehat{\mathscr{F}}^{x}$ and  $\widehat{\mathscr{F}}^{y}$ above are called the \textbf{Hesse potentials} (see section 4 of \cite{svan2012bps}) as the K\"ahler metric $ds^{2}$ on $\mathcal{B}^{0}$ can be expressed in terms of the Hessian matrices of the two potential functions. Namely, we have
        \[g_{\mathcal{B}^{0}}=\sum_{ij}\,g_{ij}da^{i}\,d\bar{a}^{j}=\sum_{k}\,Im\left(da_{D,k}\,d\bar{a}^{k}\right) =\sum_{k}\,Im\left(e^{i\theta}(dx_{k}+idy_{k})\cdot e^{-i\theta}(dx^{k}-idy^{k})\right)\]
        \begin{equation}
          =\sum_{k}\,dx^{k}\otimes dy_{k}-dx_{k}\otimes dy^{k}  
        \end{equation}
    
In the $\mathbb{Z}$-affine structure on $\mathcal{B}^{0}$ given by the affine coordinates $(x_{i},x^{i})$, we can express the K\"ahler metric (27) as 
\[ g_{\mathcal{B}^{0}}=\sum_{k}\,dx^{k}\otimes dy_{k}-dx_{k}\otimes dy^{k}=\]
\begin{equation*}
        \xlongequal{by\,(\ref{eq:24})}-\sum_{k}\,dx^{k}\otimes\sum_{l}\,\frac{\partial^{2}\widehat{\mathscr{F}}^{x}}{\partial x^{k}\partial x_{l}}dx_{l}-\sum_{k}\,dx_{k}\otimes\sum_{l}\,\frac{\partial^{2}\widehat{\mathscr{F}}^{x}}{\partial x^{k}\partial x_{l}}dx_{l}=
\end{equation*}
\begin{equation}
    =-2\sum_{k,l}\,\frac{\partial^{2}\widehat{\mathscr{F}}^{x}}{\partial x^{k}\partial x_{l}}dx_{k}\otimes dx^{l}
\end{equation}

On the other hand, by utilizing another Hesse potential $\widehat{\mathscr{F}}^{x}$, we have the following  
\[ g_{\mathcal{B}^{0}}=\sum_{k}\,dx^{k}\otimes dy_{k}-dx_{k}\otimes dy^{k}=\]
\begin{equation*}
        \xlongequal{by\,(\ref{eq:20})}\sum_{k}\,\sum_{l}\,\frac{\partial^{2}\widehat{\mathscr{F}}^{y}}{\partial y_{k}\partial y_{l}}dy_{l}\otimes dy_{k}+\sum_{k}\sum_{l}\,\frac{\partial^{2}\widehat{\mathscr{F}}^{y}}{\partial y^{k}\partial y_{l}}dy_{l}\otimes dy^{k}=
\end{equation*}
\begin{equation}
    =2\sum_{k,l}\,\frac{\partial^{2}\widehat{\mathscr{F}}^{y}}{\partial y^{k}\partial y_{l}}dy_{k}\otimes dy^{l}
\end{equation}

Denote by $\mathcal{B}^{0}_{x}$ the manifold $\mathcal{B}^{0}$ in the $\mathbb{Z}$-affine structure with coordinates $(x_{i},x^{i})$, and by $\mathcal{B}^{0}_{y}$ that in the $\mathbb{Z}$-affine structure with coordinates $(y_{i},y^{i})$. The above computation then says that $\mathcal{B}^{0}_{x}$ is endowed with Riemann metric $g_{\mathcal{B}^{0}}^{x}$ given by the Hessian of $-2\widehat{\mathscr{F}^{x}}$, while the metric $g_{\mathcal{B}^{0}}^{y}$ on $\mathcal{B}^{0}_{y}$ is given through the Hessian of $2\widehat{\mathscr{F}^{y}}$. We want to show that the metric $g_{\mathcal{B}^{0}}^{x}$ is dual to $g_{\mathcal{B}^{0}}^{y}$ in the sense that their potentials are Legendre dual to each other.

\begin{proposition}\label{pro:4.4}
The two potentials, namely,
$-2\widehat{\mathscr{F}^{x}}$ and $2\widehat{\mathscr{F}^{y}}$, are the Legendre transform of each other.
\end{proposition}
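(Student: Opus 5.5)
The plan is a direct computation: I will show that the Legendre transform of $G:=-2\widehat{\mathscr{F}}^{x}$, taken in all the affine variables $(x_{i},x^{i})$ of $\mathcal{B}^{0}_{x}$, equals $2\widehat{\mathscr{F}}^{y}$, once it is expressed in the affine variables $(y_{i},y^{i})$ of $\mathcal{B}^{0}_{y}$. Throughout, the Legendre transform is understood in the usual local sense. The maximum in its definition is replaced by the critical-point value $\sum_{\alpha} q_{\alpha}p_{\alpha}-G$, where $p_{\alpha}=\partial G/\partial q_{\alpha}$. This agrees with the max whenever the Hessian is definite, and it is the version used in the paper's treatment of Monge--Amp\`ere duality.

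First I compute the conjugate momenta of $G$ using (\ref{eq:24}):
\[
\frac{\partial G}{\partial x^{l}}=2y_{l},\qquad \frac{\partial G}{\partial x_{k}}=-2y^{k}.
\]
So the dual affine coordinates are $(2y_{l},-2y^{k})$. These agree with the $\mathbb{Z}$-affine coordinates $(y_{i},y^{i})$ up to a constant linear rescaling, and a linear change of affine coordinates only composes the transform with that linear map. Next I form
\[
\widehat{G}=\sum_{l}x^{l}(2y_{l})+\sum_{k}x_{k}(-2y^{k})-G
=2\sum_{l}x^{l}y_{l}-2\sum_{k}x_{k}y^{k}+2\widehat{\mathscr{F}}^{x}.
\]
I then substitute the two defining Legendre relations $\widehat{\mathscr{F}}^{x}=\sum_{k}x_{k}y^{k}-\mathscr{F}$ and $\widehat{\mathscr{F}}^{y}=\sum_{k}y_{k}x^{k}-\mathscr{F}$. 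The terms $\pm2\sum_{k}x_{k}y^{k}$ cancel, leaving $\widehat{G}=2\sum_{l}x^{l}y_{l}-2\mathscr{F}=2\widehat{\mathscr{F}}^{y}$. As a consistency check, I differentiate $2\widehat{\mathscr{F}}^{y}$ using (\ref{eq:20}) to get $\partial(2\widehat{\mathscr{F}}^{y})/\partial y_{k}=2x^{k}$ and $\partial(2\widehat{\mathscr{F}}^{y})/\partial y^{l}=-2x_{l}$. These recover the original variables (again up to the same factor $2$ and sign), which confirms that the transform is involutive and so the relation holds in both directions.

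The step I expect to be the main obstacle is justifying that this is a genuine Legendre transform rather than a formal one. Two things must be checked. The map $(x_{i},x^{i})\mapsto\nabla G$ has to be a local diffeomorphism; this follows from the nondegeneracy of the Hessian of $G$, which is the metric $g^{x}_{\mathcal{B}^{0}}$, positive by (\ref{eq:8}). The second issue is the bookkeeping of the factor $2$ and the signs, which requires identifying the dual affine structure with $\mathcal{B}^{0}_{y}$ only up to a fixed linear map. I would state the result in the local, critical-point form and remark that convexity, and hence the max formulation, holds wherever the metric is positive definite in these coordinates.
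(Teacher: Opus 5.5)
Your proof is correct and takes the same route as the paper: form the Legendre transform, then substitute $\widehat{\mathscr{F}}^{x}=\sum_k x_k y^k-\mathscr{F}$ and $\widehat{\mathscr{F}}^{y}=\sum_k y_k x^k-\mathscr{F}$ so that the $x_k y^k$ terms cancel. Your bookkeeping is the sharper of the two. The paper transforms $\widehat{\mathscr{F}}^{x}$ itself, pairs $x^k\leftrightarrow y_k$ and $x_k\leftrightarrow y^k$ with plus signs, and ends at $\sum_k x^k y_k+\mathscr{F}$; your use of the true momenta $(2y_l,-2y^k)$ of $-2\widehat{\mathscr{F}}^{x}$ from (\ref{eq:24}) is what actually yields exactly $2\widehat{\mathscr{F}}^{y}$, up to the linear rescaling you flag (which is real-affine but not in $GL(2n,\mathbb{Z})$).
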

\begin{proof}
\[\widehat{\widehat{\mathscr{F}^{x}}}(y_{i},y^{i})=\sum_{k}\,\left(x^{k}y_{k}+x_{k}y^{k}\right)-\widehat{\mathscr{F}^{x}}(x_{i},x^{i})=\]\[=\sum_{k}\,\left(x^{k}y_{k}+x_{k}y^{k}\right)-\sum_{k}\,x_{k}y^{k}-\mathscr{F}(x^{i},y^{i})=\]
\[=\sum_{k}\,x^{k}y_{k}+\mathscr{F}(x^{i},y^{i})=\widehat{-\mathscr{F}^{y}}(y_{i},y^{i})\]
Similarly, we have that\[\widehat{\widehat{\mathscr{F}^{y}}}(x_{i},x^{i})=\sum_{k}\,\left(x^{k}y_{k}+x_{k}y^{k}\right)-\widehat{\mathscr{F}^{y}}(x_{i},x^{i})=\]\[
        =\sum_{k}\,\left(x^{k}y_{k}+x_{k}y^{k}\right)-\sum_{k}\,x^{k}y_{k}-\mathscr{F}(x^{i},y^{i})=\]\[=\sum_{k}\,x_{k}y^{k}+\mathscr{F}(x^{i},y^{i})=\widehat{-\mathscr{F}^{x}}(x_{i},x^{i})\]
\end{proof}
Consequently, we see that the affine structures induced by $(x^{i},x_{i})$ and $(y^{i},y_{i})$, together with the corresponding dual metrics $g_{\mathcal{B}^{0}}^{x}$ and $g_{\mathcal{B}^{0}}^{y}$, make the base $\mathcal{B}^{0}$ a Monge-Amp\`ere manifold in two ways which are dual to each other.
\subsection{Attractor flow as Hesse flow}\label{sec:4.2}

Recall that for the charge $\gamma$ at $b_{0}\in\mathcal{B}$ given by $\gamma=(q_{k},g^{k})=\sum_{k}\,q_{k}\alpha^{k}(\mathbf{u})+\sum_{k}\,g^{k}\beta_{k}\left(\mathbf{u}\right)$, it has central charge\[Z_{b_{0}}(\gamma)=\sum_{k}\,q_{k}a^{k}(\mathbf{u})+\sum_{k}\,g^{k}a_{D,k}(\mathbf{u})\]
Denote by $\theta=Arg\,Z_{b_{0}}(\gamma)$, the attractor flow line passing through $b_{0}$, in the direction $\gamma$ is given by (see formulae (\ref{eq:15}))
\[Im\left(e^{-i\theta}Z_{\mathbf{u}(t)}(\gamma)\right)=Im\left(\sum_{k}\,q_{k}\,\left(e^{-i\theta}a^{k}\right)+\sum_{k}\,g^{k}\,\left(e^{-i\theta}a_{D,k}\right)\right)=0\] 
By using the affine coordinates $(y^{i},y_{i})$, the attractor flow equation can be rewritten as
\begin{equation}
    \sum_{k}\,q_{k}\,y^{k}(t)+\sum_{k}\,g^{k}\,y_{k}(t)=0
\end{equation}
which by using the relation (\ref{eq:24}), is equivalent to the following
\begin{equation}\label{eq:31}
    \sum_{k}\,q_{k}\frac{\partial\widehat{\mathscr{F}}^{x}}{\partial x_{k}}-\sum_{k}\,g^{k}\frac{\partial\widehat{\mathscr{F}}^{x}}{\partial x^{k}}=0
\end{equation}
\begin{definition}
For a function $f(x_{i},x^{i})$ in affine structure given by $(x_{i},x^{i})$, define its \textbf{gradient} to be the following
\begin{equation}
    \widetilde{\nabla}f:=\left(\frac{\partial f}{\partial x_{k}},-\frac{\partial f}{\partial x^{k}}\right)
\end{equation}
\end{definition}
Then the \textbf{attractor flow} equation (\ref{eq:31}) above can be written as
\begin{equation}\label{eq:33}
    \widetilde{\nabla}\widehat{\mathscr{F}^{x}}\cdot\gamma=0
\end{equation}
\begin{proposition}
The attractor flow lines are the flow lines on which the gradient of the Hesse potential $\widehat{\mathscr{F}^{x}}$ vanishes.
\end{proposition}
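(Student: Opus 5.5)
The proposition is essentially a restatement of the computation leading to (\ref{eq:33}). The plan is to run that chain of equivalences carefully in both directions, so that "attractor flow line" and "the gradient of $\widehat{\mathscr{F}^{x}}$ paired with $\gamma$ vanishes" describe the same locus in $\mathcal{B}^{0}$.

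I fix $(b_{0},\gamma)\in tot\,\underline{\Gamma}$ with $\gamma=(q_{k},g^{k})$ in the symplectic basis $\{\alpha^{k},\beta_{k}\}$. I also fix $\theta=Arg\,Z_{b_{0}}(\gamma)$ and work in the rotated real special coordinates $e^{-i\theta}a^{k}=x^{k}+iy^{k}$, $e^{-i\theta}a_{D,l}=x_{l}+iy_{l}$.

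The first step is Proposition \ref{pro3.5}. Along the attractor flow $Im(e^{-i\theta}Z(\gamma))$ is constant, and at $b_{0}$ it vanishes by the choice of $\theta$. So the flow line through $b_{0}$ is exactly the affine hyperplane (\ref{eq:15}). By linearity of $Z$ in $\gamma$ this is $\sum_{k}q_{k}y^{k}+\sum_{k}g^{k}y_{k}=0$, which is a straight line in the $\mathbb{Z}$-affine structure $\mathcal{B}^{0}_{y}$.

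The second step changes variables to the affine structure $\mathcal{B}^{0}_{x}$ with coordinates $(x_{i},x^{i})$. By (\ref{eq:24}), which came from the Legendre transform $\widehat{\mathscr{F}}^{x}=\sum_{k}x_{k}y^{k}-\mathscr{F}$ together with Proposition \ref{pro:4.3}, I substitute $y^{k}=\partial\widehat{\mathscr{F}}^{x}/\partial x_{k}$ and $y_{l}=-\partial\widehat{\mathscr{F}}^{x}/\partial x^{l}$. This turns the line equation into (\ref{eq:31}). Pairing the gradient $\widetilde{\nabla}\widehat{\mathscr{F}^{x}}=(\partial_{x_{k}}\widehat{\mathscr{F}}^{x},-\partial_{x^{k}}\widehat{\mathscr{F}}^{x})$ with the charge vector $(q_{k},g^{k})$ then gives precisely (\ref{eq:33}). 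Read this way, "the gradient vanishes" means the component of $\widetilde{\nabla}\widehat{\mathscr{F}^{x}}$ along $\gamma$ vanishes. I would say so explicitly in the proof, since full vanishing of the gradient would force $y^{k}=y_{l}=0$, which is too strong.

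For the converse, every step above is an equality of functions on the chart. So any curve (or point) satisfying $\widetilde{\nabla}\widehat{\mathscr{F}^{x}}\cdot\gamma=0$ and passing through $b_{0}$ satisfies $Im(e^{-i\theta}Z(\gamma))=0$. It is therefore contained in the attractor line, the line being the one-dimensional locus cut out in the two-real-dimensional reduction, or its appropriate analogue in general.

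The main obstacle is justifying that the Legendre change of variables $(x^{i},y^{i})\mapsto(x_{i},x^{i})$ is a valid local coordinate change. This requires the map $y^{i}\mapsto x_{i}=\partial\mathscr{F}/\partial y^{i}$ to be a local diffeomorphism, that is, $\partial^{2}\mathscr{F}/\partial y^{i}\partial y^{j}$ must be nondegenerate. I would derive this from the nondegeneracy condition (\ref{eq:8}): the matrix in question is, up to sign and rotation, $Im(e^{-2i\theta}\partial^{2}\mathcal{F}/\partial a^{i}\partial a^{j})$. If that turns out to be delicate, I would instead restrict to the open dense set where it is invertible, and note that the hyperplane description (\ref{eq:15}) holds everywhere on $\mathcal{B}^{0}$.
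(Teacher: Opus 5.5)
This is the paper's argument: (\ref{eq:15}) is rewritten as $\sum_k q_k y^k+\sum_k g^k y_k=0$, then (\ref{eq:24}) turns it into (\ref{eq:31}), i.e.\ (\ref{eq:33}). You correctly read ``the gradient vanishes'' as $\widetilde{\nabla}\widehat{\mathscr{F}^{x}}\cdot\gamma=0$ (indeed $\widetilde{\nabla}\widehat{\mathscr{F}^{x}}=(y^{k},y_{k})$); like the paper, though, you only show that the flow line lies in this locus, and for $n>1$ the locus is a hypersurface rather than a line, so your converse gives containment only. Your ``main obstacle'' is not one: the two $y$-derivatives produce $(ie^{i\theta})^{2}$, which cancels the $e^{-2i\theta}$, so $\partial^{2}\mathscr{F}/\partial y^{i}\partial y^{j}$ is a negative multiple of $\mathrm{Im}\,\partial^{2}\mathcal{F}/\partial a^{i}\partial a^{j}$ (no rotation), which is definite on all of $\mathcal{B}^{0}$ by (\ref{eq:8}) and (\ref{eq:11}), so the Legendre change of variables needs no restriction to a dense set.
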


Consider the dual of the attractor flow equation (\ref{eq:33}) above, namely
\begin{equation}\label{eq:34}
     \widetilde{\nabla}\widehat{\mathscr{F}^{y}}\cdot\gamma=0
\end{equation}

By duality, this equation should be interpreted as certain attractor flow called the \textbf{dual attractor flow}. To this end, let us take the real part of the rotated central charge function $e^{-i\theta}Z_{\mathbf{u}(t)}(\gamma)$, which gives
\[Re\left(e^{-i\theta}Z_{\mathbf{u}(t)}(\gamma)\right)=Re\left(\sum_{k}\,q_{k}\,\left(e^{-i\theta}a^{k}\right)+\sum_{k}\,g^{k}\,\left(e^{-i\theta}a_{D,k}\right)\right)= \sum_{k}\,q_{k}\,x^{k}(t)+\sum_{k}\,g^{k}\,x_{k}(t)\]

By using the relations (21) and the definition 4.2., the right hand side of the above equation can be expressed as
\begin{equation}
    \sum_{k}\,q_{k}\frac{\partial\widehat{\mathscr{F}}^{y}}{\partial y_{k}}-\sum_{k}\,g^{k}\frac{\partial\widehat{\mathscr{F}}^{y}}{\partial y^{k}}=:\widetilde{\nabla}\widehat{\mathscr{F}^{y}}\cdot\gamma
\end{equation}
Consequently, we get the following relation between the central charge and Hesse potential
\begin{equation}\label{eq:36}
    Re\left(e^{-i\theta}\,Z(\gamma)\right)=\widetilde{\nabla}\widehat{\mathscr{F}^{y}}\cdot\gamma
\end{equation}

Equation above defines the so-called \textbf{Hesse flow} in \cite{svan2012bps}. However, as $Re\left(e^{-i\theta}\,Z(\gamma)\right)=|Z(\gamma)|$ along the attractor flow line, it can not vanish identically. To make (\ref{eq:36}) compatible with (\ref{eq:34}), we observe that $Re\left(e^{-i(\theta+\frac{\pi}{2})}\,Z(\gamma)\right)\equiv 0$ (see formula \ref{eq:12}), and this leads to the following

\begin{proposition}
The attractor flow on the $\mathbb{Z}$-affine manifold $\mathcal{B}_{\theta}$ given by $\widetilde{\nabla}\widehat{\mathscr{F}^{x}}\cdot\gamma=0$ corresponds to the dual attractor flow $\widetilde{\nabla}\widehat{\mathscr{F}^{y}}\cdot\gamma=0$ on the dual $\mathbb{Z}$-affine manifold $\mathcal{B}_{\theta+\frac{\pi}{2}}$.
\end{proposition}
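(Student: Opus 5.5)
The plan is to make the rotation by $\frac{\pi}{2}$ explicit at the level of the adapted affine coordinates and the Hesse potentials, and then read off both flow equations from the identities (\ref{eq:20}), (\ref{eq:24}) and (\ref{eq:36}). First, for $\theta'=\theta+\frac{\pi}{2}$, I write $e^{-i\theta'}a^{k}=-i\,e^{-i\theta}a^{k}=y^{k}-ix^{k}$ and $e^{-i\theta'}a_{D,l}=y_{l}-ix_{l}$. Hence the adapted real coordinates on $\mathcal{B}_{\theta+\frac{\pi}{2}}$ are $(x'^{k},y'^{k})=(y^{k},-x^{k})$ and $(x'_{l},y'_{l})=(y_{l},-x_{l})$. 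Thus the ``$x$-type'' affine structure of $\mathcal{B}_{\theta+\frac{\pi}{2}}$ is the ``$y$-type'' structure of $\mathcal{B}_{\theta}$, and the ``$y$-type'' one is the ``$x$-type'' one up to a sign, which is an integral affine change. This is the concrete content of Remark \ref{re3.2} and (\ref{eq:12}).

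Next I compute the rotated potential. Since $e^{-2i\theta'}=-e^{-2i\theta}$, the function of Proposition \ref{pro:4.3} becomes $\mathscr{F}'=-\mathscr{F}$. I then form the Legendre transform $\widehat{\mathscr{F}'}^{x}(x'_{i},x'^{i})=\sum_{k}x'_{k}y'^{k}-\mathscr{F}'=-\sum_{k}y_{k}x^{k}+\mathscr{F}=-\widehat{\mathscr{F}}^{y}(y_{i},y^{i})$. By (\ref{eq:24}) applied in the rotated frame, the attractor flow equation $\widetilde{\nabla}\widehat{\mathscr{F}'}^{x}\cdot\gamma=0$ on $\mathcal{B}_{\theta+\frac{\pi}{2}}$ expands, through the chain rule with $x'=y$, to $-\bigl(\sum_{k}q_{k}\partial\widehat{\mathscr{F}}^{y}/\partial y_{k}-\sum_{k}g^{k}\partial\widehat{\mathscr{F}}^{y}/\partial y^{k}\bigr)=0$, which is exactly (\ref{eq:34}).

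Finally, I connect this to the central charge. By (\ref{eq:36}) the left side equals $-Re(e^{-i\theta}Z(\gamma))$ expressed in the rotated frame, which equals $Im(e^{-i\theta'}Z(\gamma))$ up to sign. So the vanishing of $\widetilde{\nabla}\widehat{\mathscr{F}^{y}}\cdot\gamma$ is precisely the straight-line condition (\ref{eq:15}) for the angle $\theta+\frac{\pi}{2}$, that is, Proposition \ref{pro3.5} on $\mathcal{B}_{\theta+\frac{\pi}{2}}$. This identifies the two flows, and the Legendre duality of Proposition \ref{pro:4.4} shows that the pair of Monge-Amp\`ere structures is simply exchanged.

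I expect the main obstacle to be bookkeeping of signs and of which Legendre transform is taken in which frame. In particular, I need to make sure that $\widehat{\mathscr{F}'}^{x}$ really coincides with $\pm\widehat{\mathscr{F}}^{y}$ rather than with its double Legendre transform. I also need to interpret the statement consistently, since along the original flow $Re(e^{-i\theta}Z)=|Z|\neq0$. I would therefore phrase the correspondence as holding for the flow whose phase is $\theta+\frac{\pi}{2}$, and verify the identity directly on the coordinates before invoking (\ref{eq:36}).
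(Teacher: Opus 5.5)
Your coordinate computations are correct, but you rotated the wrong Hesse potential, so the correspondence you prove is the reverse of the one stated. Your identity $\widehat{\mathscr{F}'}^{x}=-\widehat{\mathscr{F}}^{y}$ gives $\widetilde{\nabla}\widehat{\mathscr{F}'}^{x}\cdot\gamma=-\widetilde{\nabla}\widehat{\mathscr{F}}^{y}\cdot\gamma=-\mathrm{Re}(e^{-i\theta}Z(\gamma))$. That matches the attractor-type equation on $\mathcal{B}_{\theta+\frac{\pi}{2}}$ with the dual attractor equation (\ref{eq:34}) on $\mathcal{B}_{\theta}$. The proposition instead pairs the attractor equation (\ref{eq:33}) on $\mathcal{B}_{\theta}$ with the dual attractor equation on $\mathcal{B}_{\theta+\frac{\pi}{2}}$. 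With $\theta=\mathrm{Arg}\,Z_{b_{0}}(\gamma)$, the locus you identify is $\{\mathrm{Re}(e^{-i\theta}Z(\gamma))=0\}$. It does not contain $b_{0}$ and is not the attractor line of $(b_{0},\gamma)$; this is exactly the inconsistency you noticed at the end. Re-phrasing the claim for ``the flow whose phase is $\theta+\frac{\pi}{2}$'' changes the proposition rather than proving it. In the paper, the fact that $\mathrm{Re}(e^{-i\theta}Z(\gamma))=|Z(\gamma)|\neq0$ is precisely why the dual attractor equation is written in the frame $\theta+\frac{\pi}{2}$. Applying (\ref{eq:36}) in that frame gives $\widetilde{\nabla}\widehat{\mathscr{F}'}^{y}\cdot\gamma=\mathrm{Re}(e^{-i(\theta+\frac{\pi}{2})}Z(\gamma))=\mathrm{Im}(e^{-i\theta}Z(\gamma))$, which vanishes exactly along the attractor line (\ref{eq:15}); cf.\ (\ref{eq:12}).

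The repair uses the other half of your first paragraph. The $y$-type coordinates of $\mathcal{B}_{\theta+\frac{\pi}{2}}$ are $(y'_{k},y'^{k})=(-x_{k},-x^{k})$, and $\widehat{\mathscr{F}'}^{y}=\sum_{k}y'_{k}x'^{k}-\mathscr{F}'=-\sum_{k}x_{k}y^{k}+\mathscr{F}=-\widehat{\mathscr{F}}^{x}$. Under the chain rule the two sign changes cancel, so $\widetilde{\nabla}\widehat{\mathscr{F}'}^{y}\cdot\gamma=\widetilde{\nabla}\widehat{\mathscr{F}}^{x}\cdot\gamma=\mathrm{Im}(e^{-i\theta}Z(\gamma))$ identically on $\mathcal{B}^{0}$. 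This proves the statement as written, and is an explicit-coordinate version of the paper's one-line argument. Alternatively, keep your identity but apply it with $\theta-\frac{\pi}{2}$ in place of $\theta$. Then observe that rotating by $\pi$ negates all adapted coordinates and leaves $\mathscr{F}$ unchanged. Hence the dual attractor equations on $\mathcal{B}_{\theta-\frac{\pi}{2}}$ and $\mathcal{B}_{\theta+\frac{\pi}{2}}$ differ only by an overall sign and have the same zero set.
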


Similarly, we introduce the notion of the \textbf{dual Hesse flow} described by the following equation:

\begin{equation}\label{eq:37}
    Im\left(e^{-i\theta}Z(\gamma)\right)=\widetilde{\nabla}\widehat{\mathscr{F}^{x}}\cdot\gamma
\end{equation}

which is seen to be nothing but the attractor flow equation when $\theta=Arg\,Z(\gamma)$. 

Consequently, both the Hesse flow and its dual has very simple origin. Namely, by taking the real and imaginary of the rotated central charge function $e^{-i\theta}Z(\gamma)$ (i.e., the rotated special K\"ahler coordinate on $\mathcal{B}$),  and then write them down by using the adapted $\mathbb{Z}$-affine coordinates on $\mathcal{B}$, we will get the Hesse flow and dual Hesse flow respectively. 

\subsection{Relation between the attractor flow and the Hesse flow}\label{sec:4.3}

Motivated by the above discussion, we can treat both attractor flow and Hesse flow in the same framework as below:

We start with central charge function $Z(\gamma)$ associated to the charge $\gamma$, and use the $\mathbb{Z}$-affine coordinates on $\mathcal{B}$, we see that taking the real part of the central charge function gives us the Hesse flow
\begin{equation}\label{eq:38}
    Re\left(e^{-i\theta}Z(\gamma)\right)=\widetilde{\nabla}\widehat{\mathscr{F}^{y}}\cdot\gamma
\end{equation}

while taking the imaginary part gives us the dual Hesse flow
\begin{equation}\label{eq:39}
    Im\left(e^{-i\theta}Z(\gamma)\right)=\widetilde{\nabla}\widehat{\mathscr{F}^{x}}\cdot\gamma
\end{equation}

Next, denote $\theta=Arg\left(Z(\gamma)\right)$. We see that in the $\mathbb{Z}$-affine structure $\mathcal{B}_{\theta}$, the dual Hesse flow (\ref{eq:39}) specializes into the attractor flow (\ref{eq:33}); while in the $\mathbb{Z}$-affine structure $\mathcal{B}_{\theta+\frac{\pi}{2}}$, the Hesse flow (\ref{eq:38}) becomes the dual attractor flow (\ref{eq:34})).

\vspace{7pt}

\begin{remark}
It is known that the $\mathbb{Z}$-affine structures $\mathcal{B}_{\theta}$ and $\mathcal{B}_{\theta+\frac{\pi}{2}}$ correspond, within the frame of mirror symmetry, to the symplectic and complex affine structures respectively (for example, see \cite{lin2017open}). And the duality above,  which manifests in our context as the operations of taking the imaginary and the real part, heuristically speaking, corresponds to separating the "vertical" and the "horizontal" direction respectively.
\end{remark}

\bibliographystyle{unsrt}
\bibliography{references}

\end{document}